\newtheorem{definition}{Definition}
\newtheorem{theorem}{Theorem}
\newtheorem{lemma}{Lemma}
\newtheorem{proposition}{Proposition}
\begin{document}
	
\title{Composable, Unconditionally Secure Message Authentication without any Secret Key}
\author{Dimiter~Ostrev%
	\thanks{Dimiter Ostrev is with the Interdisciplinary Centre for Security, Reliability and Trust, University of Luxembourg, 6, Avenue de la Fonte, L-4364 Esch-sur-Alzette, Luxembourg, e-mail: dimiter.ostrev@uni.lu}%
}
\maketitle

\begin{abstract}
	We consider a setup in which the channel from Alice to Bob is less noisy than the channel from Eve to Bob. We show that there exist encoding and decoding which accomplish error correction and authentication simultaneously; that is, Bob is able to correctly decode a message coming from Alice and reject a message coming from Eve with high probability. The system does not require any secret key shared between Alice and Bob, provides information theoretic security, and can safely be composed with other protocols in an arbitrary context. 
\end{abstract}

\section{Introduction}

Message authentication allows the receiver to verify that the message comes from the legitimate sender and not from an adversary. Along with secrecy, authentication is one of the most fundamental properties in cryptography. It has direct real world applications, for example in ensuring that the order for a financial transaction comes from somebody authorized to perform the transaction, and not a criminal. Authentication is also used as a primitive in many other cryptographic protocols, for example key exchange protocols, where it serves to protect against man-in-the-middle and impersonation attacks. 

When defining and proving the security of an authentication scheme, we distinguish between computational and unconditional security. In the first case, the definition and proof rely on the assumption that the adversary has limited computational resources, and often also on the conjecture that a certain problem cannot be solved within the specified resource bound. On the other hand, unconditional security makes no assumption on the computational resources available to an adversary; the scheme is guaranteed to be secure against adversaries with unbounded resources. 

Another important aspect of the definition of security is whether it provides composability guarantees or not. It is known that certain definitions of security, although intuitively appealing, fail to guarantee that a cryptographic scheme remains secure in an arbitrary context. One of the known examples is a criterion based on the accessible information used in early security proofs for Quantum Key Distribution. Reference \cite{konig2007small} shows that it is possible for a protocol to satisfy this security criterion, but nevertheless the resulting key cannot be used for one-time pad encryption of a message whose header is known to the adversary. Examples such as this one motivate the introduction of frameworks for composable security such as \cite{canetti2001universally,backes2004general,maurer2011abstract}; protocols proven secure in such a framework are guaranteed to compose safely with other protocols in the framework, and to remain secure in an arbitrary context. 

In this paper, we consider the strongest possible type of message authentication:  we focus on composable, unconditionally secure schemes. It is known that for message authentication to work, Alice and Bob need to have some initial advantage over their adversary Eve; otherwise, Eve can impersonate Alice to Bob and Bob to Alice. What can the initial advantage be? 

Previous research on authentication in information theoretic cryptography has focused on the scenario in which Alice and Bob share randomness that is secret from their adversary Eve.  Information theoretically secure authentication can be achieved using universal$_2$ classes of hash functions \cite{wegman1981new}: Alice and Bob share a secret key $k$ that encodes a particular function $h_k$ from a suitable class of hash functions. To send message $m$ to Bob, Alice computes the tag $t = h_k(m)$ and sends $(m,t)$. To verify that $(m,t)$ comes from Alice, Bob checks that $t = h_k(m)$. Research on this scenario has focused on finding suitable classes of functions for authentication and on proving lower bounds on the secret key size needed for a given level of security; see, for example, \cite{simmons1984authentication,simmons1988survey,stinson1991universal,stinson1994combinatorial,maurer2000authentication}. A variant of the basic scheme for authentication by universal hashing involves recycling part of the key when authenticating multiple messages; this was proposed in \cite[Section 4]{wegman1981new}. Recently, the composable security of authentication by universal hashing both with and without key recycling has been established \cite{portmann2014key}. 

A strong motivation for exploring different possibilities to obtain a composable, unconditionally secure authenticated channel comes from the study of information-theoretically secure key distribution protocols in classical \cite{maurer1993secret,ahlswede1993common} and quantum \cite{bennett1984quantum,ekert1991quantum} cryptography. These key distribution protocols require interaction between the honest participants over an authenticated channel. If Alice and Bob need an initial secret key for authentication, these protocols become key expansion rather than key distribution protocols. The investigation of whether the requirements for authentication can be lowered \cite{maurer2003secretPartI,maurer2003secretPartII,maurer2003secretPartIII,renner2003unconditional,renner2004exact} led to the development of interactive authentication protocols, in which only partially secret and partially correlated strings suffice. In \cite{renner2003unconditional}, an interactive protocol for authentication is proposed that works even if the adversary knows a substantial fraction of the secret key. In \cite{renner2004exact}, it was shown that this interactive authentication protocol, combined with an information reconciliation protocol, can work even in the case when the randomness initially given to Alice and Bob is not perfectly but only partly correlated.

In this paper, we depart from the model of common randomness shared by Alice and Bob. The inspiration for this comes from the work Wyner \cite{wyner1975wire} on the wiretap channel and Csizar and Korner \cite{csiszar1978broadcast} on the broadcast channel with confidential messages. In these papers, it is shown that if the channel from Alice to Bob is less noisy than the channel from Alice to Eve, suitable encoding and decoding exist which accomplish error correction and secrecy simultaneously: Bob can correctly decode Alice's message, but Eve remains ignorant of it. In the present paper, we ask whether a similar phenomenon is possible for authentication instead of secrecy, and we give an affirmative answer. 

Another motivation for the present work is the study of authentication in the context of quantum key distribution. The results we prove in this paper, combined with the analysis of the composition of QKD and authentication \cite{muller2009composability,portmann2014cryptographic}, show that QKD can be performed over insecure classical and quantum channels, between two parties who share no randomness initially, provided that the classical channel between them is less noisy than the channel between them and the adversary. 

As far as the present author is aware, the idea of using an advantage in channel noise for composable, unconditionally secure message authentication has not been explored before. The closest that the present author has been able to find in the cryptography literature is \cite{lai2009authentication}, which considers the problem of running a traditional, key-based authentication protocol over a wiretap channel. Also interesting are a number of methods for authentication used in physical layer security for wireless networks. These methods exploit unique characteristics of the software or hardware of different devices, or unique characteristics of the channel between two locations, to identify legitimate from malicious signals. An overview of these techniques can be found in the surveys \cite[Section VIII-D]{mukherjee2014principles} and \cite{zeng2010non}. 

The multiple access channel from network information theory \cite[Section 15.3]{cover2006elements} is also related to the present paper in that the multiple access channel has many senders and one receiver. However, all the senders and the receiver in the multiple access channel cooperate; they choose their encoding and decoding rules together so as to achieve certain rates of transmission from each sender to the receiver. In our setup, Alice and Bob cooperate, but Eve is malicious. She observes the encoding and decoding rules that Alice and Bob have agreed upon, and tries her best to fool Bob into accepting a message from her as if it is a genuine message from Alice. 

The rest of the paper is structured as follows. In Section \ref{sec:Preliminaries}, we introduce the notation and certain basic results that we will use. In Section \ref{sec:AbstractCryptography}, we introduce Abstract Cryptography, the framework for composable security that we will use. In Section \ref{sec:AuthenticationFromChannelNoise} we formally explain how an authenticated channel can be constructed from an advantage in channel noise, and we prove that the construction is composable and provides information theoretic security. In Section \ref{sec:Extensions} we discuss some extensions of the results from the previous section, and in Section \ref{sec:Conclusion} we conclude the paper and note some possible directions for future work. 

\section{Preliminaries}\label{sec:Preliminaries}

We will often treat the set $\{0,1\}^n$ as a vector space over the field with two elements; thus, for $v,w \in \{0,1\}^n$, $v+w = (v_1+w_1, \dots, v_n+w_n)$ and for a vector $v$ and a subset $S$, $v+S = \{v+w : w \in S\}$. 

By a Bernoulli random variable with parameter $p$ we mean a random variable $X$ such that $Pr(X=0) = 1-p$ and $Pr(X=1)=p$. We will often work with sequences of i.i.d. Bernoulli random variables, where the abbreviation i.i.d. stands for independent, identically distributed. 

The notion of typical sequences plays a central role in information theory: 
\begin{definition}
	A sequence $x=(x_1,x_2, \dots x_n) \in \{0,1\}^n$ is called $\delta$-typical for a sequence $X=(X_1, X_2, \dots, X_n)$ of i.i.d. $Bernoulli(p)$ random variables if \[ |\frac{1}{n} \log Pr(X=x) + h(p) | < \delta\] where $h(p) = -p \log(p) - (1-p) \log (1-p)$ is the binary entropy function.\footnote{All logarithms are taken to base 2. } We denote the set of all $\delta$-typical sequences for $n$ i.i.d. $Bernoulli(p)$ random variables by $T(n,p,\delta)$. 
\end{definition}

An important result in information theory is the Theorem of Typical Sequences \cite[Theorems 3.1.1-3.1.2]{cover2006elements}:
\begin{theorem}\label{thm:TypicalSequences}
	
		 Let $X_1, X_2, \dots$ be a sequence of i.i.d $Bernoulli(p)$ random variables.  Then,  \[ \forall \delta>0, \lim_{n \rightarrow \infty} Pr((X_1, \dots X_n) \in T(n,p,\delta)) = 1 \] In addition, we have the bound \[|T(n,p,\delta)| \leq 2^{n (h(p)+\delta)} \]on the number of $\delta$-typical sequences
	
\end{theorem}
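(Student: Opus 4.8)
The plan is to prove both parts from the single observation that the quantity $-\frac{1}{n}\log Pr(X=x)$ appearing in the definition of $T(n,p,\delta)$ is, up to sign, an empirical average of i.i.d.\ random variables, so that the weak law of large numbers applies. Concretely, since the coordinates are independent we have $Pr(X=x) = \prod_{i=1}^n p(x_i)$, where $p(1)=p$ and $p(0)=1-p$, and hence $-\frac{1}{n}\log Pr(X=x) = \frac{1}{n}\sum_{i=1}^n \left(-\log p(x_i)\right)$.

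First I would introduce the random variables $W_i = -\log p(X_i)$. Because the $X_i$ are i.i.d., so are the $W_i$, and a direct computation gives $E[W_i] = -p\log p - (1-p)\log(1-p) = h(p)$. Evaluated at the random sample, $\frac{1}{n}\sum_{i=1}^n W_i = -\frac{1}{n}\log Pr(X=(X_1,\dots,X_n))$, so that the event $(X_1,\dots,X_n)\in T(n,p,\delta)$ is exactly the event $\left|\frac{1}{n}\sum_{i=1}^n W_i - h(p)\right| < \delta$. The weak law of large numbers then says that this probability tends to $1$ as $n\to\infty$ for every fixed $\delta>0$, which is the first claim.

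For the cardinality bound I would use that probabilities sum to at most $1$. For any $x \in T(n,p,\delta)$, the defining inequality gives $\frac{1}{n}\log Pr(X=x) > -h(p)-\delta$, i.e.\ $Pr(X=x) > 2^{-n(h(p)+\delta)}$. Summing this lower bound over the typical set and using $\sum_{x\in T(n,p,\delta)} Pr(X=x) \le 1$ yields $|T(n,p,\delta)|\,2^{-n(h(p)+\delta)} < 1$, which rearranges to the stated bound.

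The argument is essentially routine once the identification with an i.i.d.\ average is made; the only real step is recognizing that the $-\log p(X_i)$ are i.i.d.\ with mean $h(p)$, after which the first part is an immediate application of the weak law and the second is a counting argument. I do not anticipate a substantial obstacle, though some care is needed to track which side of the two-sided typicality inequality is used in each part: the first part uses both sides (to control $|\cdot|<\delta$), whereas the cardinality bound uses only the resulting lower bound on $Pr(X=x)$.
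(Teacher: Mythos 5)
Your proof is correct and follows essentially the same route as the paper's intended argument: the paper states this theorem with a citation to Cover--Thomas rather than proving it, and its later footnote confirms the intended proof chain (Chebyshev's inequality $\Rightarrow$ law of large numbers $\Rightarrow$ theorem of typical sequences), which is exactly your application of the weak law to the i.i.d.\ variables $-\log p(X_i)$ with mean $h(p)$. Your counting argument for the bound $|T(n,p,\delta)| \leq 2^{n(h(p)+\delta)}$, summing the pointwise lower bound $Pr(X=x) > 2^{-n(h(p)+\delta)}$ over the typical set, is likewise the standard one.
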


A simple but fruitful model for a noisy communication channel is given by the Binary Symmetric Channel. The Binary Symmetric Channel with parameter $p$ acts on each input bit independently, transmitting it faithfully with probability $(1-p)$ and flipping it with probability $p$. Thus, when the vector $v \in \{0,1\}^n$ is input into this channel, the output is $v+U$ where $U = (U_1, \dots U_n)$ is a vector of i.i.d. $Bernoulli(p)$ random variables. 

\section{Abstract Cryptography}\label{sec:AbstractCryptography}

In this section, we introduce Abstract Cryptography, the framework for composable security that we will use. The general case of Abstract Cryptography was introduced in \cite{maurer2011abstract}; however, for our purposes, it is sufficient to consider the special case for honest Alice and Bob and malicious Eve as developed in \cite{maurer2011constructive}. 

\subsection{An algebra of resources and converters}

By a resource, we mean a system with three interfaces where Alice, Bob and Eve can enter inputs and receive outputs. We will denote resources by calligraphic letters, for example $\mathcal{R}$. It will be convenient to specify the functionality of a resource by giving pseudo-code for it; for example, a channel from Alice to Bob that provides authentication but no secrecy can be described as "on input $m$ from Alice, output $m$ to Bob and Eve. On input $m'$ from Eve, output $\bot$ to Bob." where we use $\bot$ to denote an error message. 

On the set of resources, we have a parallel composition operation, denoted by $\|$, which takes two resources and returns another resource. Thus, $\mathcal{R} \| \mathcal{S}$ is a resource that provides Alice, Bob and Eve with access to the interfaces of both $\mathcal{R}$ and $\mathcal{S}$. 

By converter, we mean a system with an inside and an outside interface, where the inside interface interacts with a resource and the outside interface interacts with a user. If $\alpha$ is a converter, $\mathcal{R}$ is a resource and $i \in \{A,B,E\}$ is an interface, then $\alpha_i \mathcal{R}$ is another resource, where user $i$ has the interface of the converter $\alpha$, and the other two users have their usual interfaces to $\mathcal{R}$. 

\subsection{Distinguishers, distance, construction}

By a distinguisher, we mean a system with four interfaces, three of which connect to the interfaces of a resource, and the fourth one outputs $0$ or $1$. Thus, a distinguisher $\mathcal{D}$ connected to a resource $\mathcal{R}$ is a system that outputs a single bit. 

We use distinguishers to define a notion of distance between resources: 
\begin{definition}
	The distance between two resources $\mathcal{R}, \mathcal{S}$ is \[ d(\mathcal{R}, \mathcal{S}) = \sup_{\mathcal{D}} | Pr(\mathcal{D}\mathcal{R} = 1) - Pr(\mathcal{D}\mathcal{S}=1) |\]
\end{definition}
We take the supremum over all distinguishers $\mathcal{D}$, placing no restriction on their computational resources. This corresponds to choosing to consider unconditional security (in another terminology information theoretic security). 

From the definition, we can prove that $d(\cdot,\cdot)$ has the properties of a pseudo-metric on the set of resources: 
\begin{proposition}
	For all resources $\mathcal{R}, \mathcal{S}, \mathcal{T}$
	\begin{enumerate}
		\item (Identity) $d(\mathcal{R}, \mathcal{R})=0$. 
		\item (Symmetry) $d(\mathcal{R}, \mathcal{S}) = d(\mathcal{S}, \mathcal{R})$.
		\item (Triangle inequality) $d(\mathcal{R},\mathcal{S}) + d(\mathcal{S}, \mathcal{T}) \geq d(\mathcal{R},\mathcal{T})$. 
	\end{enumerate}
\end{proposition}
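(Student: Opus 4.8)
The plan is to verify each of the three pseudo-metric properties directly from the definition
\[ d(\mathcal{R}, \mathcal{S}) = \sup_{\mathcal{D}} | \Pr(\mathcal{D}\mathcal{R} = 1) - \Pr(\mathcal{D}\mathcal{S}=1) |, \]
reducing everything to elementary facts about absolute values of real numbers. Each distinguisher $\mathcal{D}$ induces a real number $\Pr(\mathcal{D}\mathcal{R}=1) \in [0,1]$ for each resource $\mathcal{R}$, so the three claims become statements about suprema of absolute differences of such numbers.

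For identity, I would observe that for every fixed distinguisher $\mathcal{D}$ the quantity $|\Pr(\mathcal{D}\mathcal{R}=1) - \Pr(\mathcal{D}\mathcal{R}=1)|$ is identically $0$; taking the supremum of the constant function $0$ gives $d(\mathcal{R},\mathcal{R})=0$. For symmetry, the key point is that $|a-b| = |b-a|$ holds termwise for each $\mathcal{D}$, so the two sets of real numbers over which we take the supremum coincide, whence $d(\mathcal{R},\mathcal{S}) = d(\mathcal{S},\mathcal{R})$.

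The triangle inequality is the only step requiring real care, and I expect it to be the main obstacle, though a mild one. For each fixed distinguisher $\mathcal{D}$, writing $a = \Pr(\mathcal{D}\mathcal{R}=1)$, $b = \Pr(\mathcal{D}\mathcal{S}=1)$, $c = \Pr(\mathcal{D}\mathcal{T}=1)$, the ordinary triangle inequality for reals gives $|a-c| \le |a-b| + |b-c|$. The subtlety is that one cannot simply take the supremum of both sides termwise, because the supremum of a sum is generally not the sum of the suprema; instead I would bound $|a-b| \le d(\mathcal{R},\mathcal{S})$ and $|b-c| \le d(\mathcal{S},\mathcal{T})$ for this particular $\mathcal{D}$ (each distinguisher's value is dominated by the corresponding supremum), obtaining
\[ |\Pr(\mathcal{D}\mathcal{R}=1) - \Pr(\mathcal{D}\mathcal{T}=1)| \le d(\mathcal{R},\mathcal{S}) + d(\mathcal{S},\mathcal{T}). \]
Since this upper bound is a single constant independent of $\mathcal{D}$, I can now take the supremum over $\mathcal{D}$ on the left-hand side, yielding $d(\mathcal{R},\mathcal{T}) \le d(\mathcal{R},\mathcal{S}) + d(\mathcal{S},\mathcal{T})$, which is the desired inequality. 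Throughout, the only facts used are the supremum existing (the differences lie in $[-1,1]$, so each supremum is finite) and the elementary properties of $|\cdot|$; no assumption about the internal structure of resources or distinguishers is needed, which is exactly why this is a pseudo-metric rather than a metric (distinct resources indistinguishable to every $\mathcal{D}$ have distance $0$).
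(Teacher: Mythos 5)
Your proof is correct, and it takes the only natural route: the paper itself offers no explicit proof of this proposition (it merely asserts that the properties follow from the definition of $d$), so your direct verification is precisely the argument the paper leaves implicit. Your handling of the triangle inequality --- bounding $|\Pr(\mathcal{D}\mathcal{R}=1)-\Pr(\mathcal{D}\mathcal{S}=1)| \le d(\mathcal{R},\mathcal{S})$ and $|\Pr(\mathcal{D}\mathcal{S}=1)-\Pr(\mathcal{D}\mathcal{T}=1)| \le d(\mathcal{S},\mathcal{T})$ termwise for each fixed $\mathcal{D}$ before taking the supremum, rather than incorrectly passing the supremum through the sum --- is exactly the one point requiring care, and you got it right.
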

We can also prove that $d$ has two additional useful properties, which formally capture the intuition "if $\mathcal{R},\mathcal{S}$ are close, then they remain close in an arbitrary context": 
\begin{proposition}
	For all resources $\mathcal{R}, \mathcal{S}, \mathcal{T}$, converters $\alpha$ and interfaces $i \in \{A,B,E\}$
	\begin{enumerate}
		\item (Non-increasing under a converter) \[d(\alpha_i\mathcal{R},\alpha_i\mathcal{S}) \leq d(\mathcal{R},\mathcal{S})\]
		\item (Non-increasing under a resource in parallel) \[d(\mathcal{R}\|\mathcal{T}, \mathcal{S}\|\mathcal{T}) \leq d(\mathcal{R},\mathcal{S})\]
	\end{enumerate}
\end{proposition}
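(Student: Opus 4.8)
The plan is to prove both inequalities by the same \emph{reduction} argument: any distinguisher for the composed resources can be turned into a distinguisher for the original pair $\mathcal{R}, \mathcal{S}$ that achieves exactly the same distinguishing advantage, by absorbing the extra structure (the converter $\alpha$ in part 1, the parallel resource $\mathcal{T}$ in part 2) into the distinguisher itself. Because the supremum defining $d$ ranges over \emph{all} distinguishers with no restriction on their resources, such an enlarged distinguisher is always a legitimate competitor, and the inequality then follows by comparing suprema.

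For part 1, I would start from an arbitrary distinguisher $\mathcal{D}$ connected to $\alpha_i \mathcal{R}$ (resp. $\alpha_i \mathcal{S}$). The key step is to form a new system $\mathcal{D}'$ by plugging the converter $\alpha$ into interface $i$ of $\mathcal{D}$; that is, $\mathcal{D}'$ behaves like $\mathcal{D}$ except that everything it sends to or receives from interface $i$ is first routed through $\alpha$. This $\mathcal{D}'$ has three interfaces to a resource and one output bit, so it is itself a distinguisher. By the way converters and resources compose, connecting $\mathcal{D}'$ to $\mathcal{R}$ produces exactly the same system as connecting $\mathcal{D}$ to $\alpha_i\mathcal{R}$, and likewise with $\mathcal{S}$; hence $Pr(\mathcal{D}(\alpha_i\mathcal{R})=1) = Pr(\mathcal{D}'\mathcal{R}=1)$ and $Pr(\mathcal{D}(\alpha_i\mathcal{S})=1) = Pr(\mathcal{D}'\mathcal{S}=1)$. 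Therefore the advantage of $\mathcal{D}$ against $\alpha_i\mathcal{R}, \alpha_i\mathcal{S}$ equals the advantage of $\mathcal{D}'$ against $\mathcal{R}, \mathcal{S}$, which is at most $d(\mathcal{R},\mathcal{S})$. Taking the supremum over $\mathcal{D}$ gives $d(\alpha_i\mathcal{R}, \alpha_i\mathcal{S}) \leq d(\mathcal{R},\mathcal{S})$.

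Part 2 proceeds identically, with the parallel resource $\mathcal{T}$ playing the role of the converter. Given a distinguisher $\mathcal{D}$ for $\mathcal{R}\|\mathcal{T}$ versus $\mathcal{S}\|\mathcal{T}$, I would build $\mathcal{D}'$ that runs an internal copy of $\mathcal{T}$ and simulates $\mathcal{D}$: at each of its three interfaces $\mathcal{D}'$ forwards the part intended for $\mathcal{R}$ (resp. $\mathcal{S}$) to the external resource, answers the part intended for $\mathcal{T}$ from its internal copy, and outputs whatever $\mathcal{D}$ outputs. Again, connecting $\mathcal{D}'$ to $\mathcal{R}$ reproduces $\mathcal{D}(\mathcal{R}\|\mathcal{T})$ and connecting it to $\mathcal{S}$ reproduces $\mathcal{D}(\mathcal{S}\|\mathcal{T})$, so the advantages coincide and the inequality follows after taking the supremum.

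The proof involves no hard estimate; the one point that needs care is the formal justification that the two systems one obtains by regrouping the same components are genuinely identical, so that the probabilities match \emph{exactly} rather than merely approximately. This is precisely the content of the associativity and compatibility laws of the composition operations $\alpha_i(\cdot)$ and $\|$ in the abstract cryptography framework of \cite{maurer2011abstract,maurer2011constructive}; once those laws are invoked, both statements are immediate. Thus the main obstacle is conceptual bookkeeping about the algebra of resources, converters and distinguishers rather than any quantitative argument.
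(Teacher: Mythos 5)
Your proof is correct and is exactly the paper's argument, which the paper compresses into a single sentence: a subset of all distinguishers apply the converter $\alpha$ or add the resource $\mathcal{T}$ in parallel, so the supremum defining $d(\mathcal{R},\mathcal{S})$ already dominates the advantage of any distinguisher against the composed resources. Your write-up simply makes this absorption argument explicit, including the bookkeeping that the regrouped systems are identical.
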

To prove this proposition, observe that a subset of all distinguishers apply the converter $\alpha$ or add the resource $\mathcal{T}$ in parallel.

Before we can proceed to the definition of construction, we need to introduce protocols, filters, and simulators. By a protocol, we mean a pair of converters, one for Alice and one for Bob. By a filter, we mean a converter for Eve's interface of a resource which blocks malicious actions from Eve; we will use symbols such as $\sharp, \flat$ to denote the filters for different resources. By a simulator, we mean a converter for Eve's interface of a resource; the goal of a simulator is to make the interface of one resource appear as the interface of another. 

Now we are ready to define construction. 
\begin{definition}\label{def:Construction}
	We say that a protocol $\pi = (\pi_A, \pi_B)$ constructs resource $\mathcal{S}$ from resource $\mathcal{R}$ within $\epsilon$, denoted $\mathcal{R} \xrightarrow{\pi, \epsilon} \mathcal{S}$, if 
	\begin{enumerate}
		\item ($\epsilon$-close with Eve blocked) $d(\pi_A\pi_B\sharp_E \mathcal{R}, \flat_E \mathcal{S}) < \epsilon$
		\item ($\epsilon$-close with full access for Eve) There exists a simulator $\sigma_E$ such that $d(\pi_A\pi_B\mathcal{R}, \sigma_E \mathcal{S}) < \epsilon$. 
	\end{enumerate}
\end{definition}
The typical interpretation of the definition of construction is the following: $\mathcal{S}$ is the goal, the ideal functionality that Alice and Bob want to achieve. $\mathcal{R}$ is the real resource that they have available. The combination of $\pi$ and $\mathcal{R}$ is required to be indistinguishable from $\mathcal{S}$ in two scenarios: with Eve blocked and with Eve present. 

Since Eve's interfaces to $\mathcal{R}$ and $\mathcal{S}$ may be different, we need to allow for the simulator $\sigma$ in the second condition of the definition. If $\mathcal{S}$ is considered secure, then $\sigma_E \mathcal{S}$ should be considered at least as secure; this is because a subset of all strategies for Eve against $\mathcal{S}$ apply the converter $\sigma$. 

\subsection{General composition theorem}

The notion of construction provides both parallel and sequential composition, as captured in the following theorem \cite[Theorem 1]{maurer2011constructive}: 

\begin{theorem}
	\begin{enumerate}
		\item (Parallel Composition) If $\mathcal{R} \xrightarrow{\pi, \epsilon} \mathcal{S}$ and $\mathcal{R}' \xrightarrow{\pi', \epsilon'} \mathcal{S}'$ then $\mathcal{R} \| \mathcal{R}' \xrightarrow{\pi \| \pi', \epsilon + \epsilon'} \mathcal{S} \| \mathcal{S}'$. 
		\item (Sequential Composition) If $\mathcal{R} \xrightarrow{\pi, \epsilon} \mathcal{S}$ and $\mathcal{S} \xrightarrow{\pi', \epsilon'} \mathcal{T}$ then $\mathcal{R} \xrightarrow{\pi' \pi , \epsilon + \epsilon'} \mathcal{T}$. 
		\item (Identity) For the identity protocol $\mathbf{1} = (\mathbf{1}_A, \mathbf{1}_B)$ and any resource $\mathcal{R}$, $\mathcal{R} \xrightarrow{\mathbf{1}, 0} \mathcal{R}$. 
	\end{enumerate}
\end{theorem}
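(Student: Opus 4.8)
The plan is to reduce all three parts to the two ``non-increasing'' properties of $d$ together with the triangle inequality, using the underlying algebraic axioms of the resource/converter calculus: that converters attached at distinct interfaces commute, and that parallel composition distributes over converter application, so that for instance $(\pi_A \| \pi_A')(\pi_B \| \pi_B')(\mathcal{R} \| \mathcal{R}') = (\pi_A \pi_B \mathcal{R}) \| (\pi_A' \pi_B' \mathcal{R}')$. Part 3 is immediate and I would dispatch it first: taking the simulator to be the identity converter on Eve's interface and the same filter on both sides, both distances in Definition \ref{def:Construction} are $d(\mathcal{X}, \mathcal{X}) = 0$ by the identity axiom of the pseudo-metric. (The only wrinkle is the strict inequality ``$<\epsilon$'' at $\epsilon = 0$; I would read the definition with $\leq$, or equivalently note that the construction holds within every $\epsilon>0$.)

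For Part 1 (parallel composition), I would first rewrite the resources in each condition via the distributivity axiom, so that condition 1 becomes $d\big((\pi_A\pi_B\sharp_E\mathcal{R}) \| (\pi_A'\pi_B'\sharp_E'\mathcal{R}'), (\flat_E\mathcal{S}) \| (\flat_E'\mathcal{S}')\big) < \epsilon + \epsilon'$. A single triangle-inequality step through the hybrid resource $(\flat_E\mathcal{S}) \| (\pi_A'\pi_B'\sharp_E'\mathcal{R}')$ then splits the left-hand side into two terms; the ``non-increasing under a resource in parallel'' property bounds the first by $d(\pi_A\pi_B\sharp_E\mathcal{R}, \flat_E\mathcal{S}) < \epsilon$ and the second by $d(\pi_A'\pi_B'\sharp_E'\mathcal{R}', \flat_E'\mathcal{S}') < \epsilon'$ (applying the property on each side, which is legitimate since $\|$ is symmetric). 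Condition 2 is identical once I take the product simulator $\sigma_E \| \sigma_E'$ assembled from the two given simulators.

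For Part 2 (sequential composition), the composed protocol attaches $\pi_A', \pi_B'$ outside $\pi_A, \pi_B$, so by commutativity of converters at distinct interfaces $(\pi_A'\pi_A)(\pi_B'\pi_B)\sharp_E\mathcal{R} = \pi_A'\pi_B'(\pi_A\pi_B\sharp_E\mathcal{R})$. For condition 1 I would insert the hybrid $\pi_A'\pi_B'\flat_E\mathcal{S}$: the triangle inequality yields one term $d(\pi_A'\pi_B'(\pi_A\pi_B\sharp_E\mathcal{R}), \pi_A'\pi_B'(\flat_E\mathcal{S}))$, which the non-increasing-under-converter property (applied once for $\pi_A'$, once for $\pi_B'$) bounds by $d(\pi_A\pi_B\sharp_E\mathcal{R}, \flat_E\mathcal{S}) < \epsilon$ from the first hypothesis, plus a second term $d(\pi_A'\pi_B'\flat_E\mathcal{S}, \natural_E\mathcal{T}) < \epsilon'$ that is exactly condition 1 of the second hypothesis (with $\natural$ the filter for $\mathcal{T}$).

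Condition 2 of the sequential case is the step I expect to be the main obstacle, since it requires guessing the right composite simulator and then sliding converters past one another. The correct choice is $\sigma_E'' = \sigma_E \sigma_E'$ (apply $\sigma'$ to $\mathcal{T}$, then $\sigma$ outside). The argument I would give: apply $\pi_A'\pi_B'$ to the first hypothesis $d(\pi_A\pi_B\mathcal{R}, \sigma_E\mathcal{S}) < \epsilon$ and invoke non-increasing-under-converter to preserve the bound, obtaining $d\big((\pi_A'\pi_A)(\pi_B'\pi_B)\mathcal{R}, \pi_A'\pi_B'\sigma_E\mathcal{S}\big) < \epsilon$; then commute the Eve-converter $\sigma_E$ past the Alice/Bob-converters to rewrite $\pi_A'\pi_B'\sigma_E\mathcal{S} = \sigma_E(\pi_A'\pi_B'\mathcal{S})$; finally apply $\sigma_E$ to the second hypothesis $d(\pi_A'\pi_B'\mathcal{S}, \sigma_E'\mathcal{T}) < \epsilon'$ and use non-increasing-under-converter once more to get $d(\sigma_E\pi_A'\pi_B'\mathcal{S}, \sigma_E\sigma_E'\mathcal{T}) < \epsilon'$. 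One triangle inequality through $\sigma_E\pi_A'\pi_B'\mathcal{S}$ then gives $d((\pi_A'\pi_A)(\pi_B'\pi_B)\mathcal{R}, \sigma_E\sigma_E'\mathcal{T}) < \epsilon + \epsilon'$, as required. The delicate points throughout are bookkeeping: ensuring each ``peeling'' step moves only converters sitting at a single interface (so the non-increasing property applies verbatim) and that every commutation used is genuinely between converters at distinct interfaces.
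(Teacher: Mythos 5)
Your proof is correct, but note that the paper itself offers nothing to compare it against: this theorem is stated as an imported result, cited to Maurer's constructive cryptography paper (reference [Theorem 1, maurer2011constructive]), and no proof is given in the text. What you have reconstructed is essentially the standard argument from that cited source: each condition of Definition \ref{def:Construction} is handled by inserting a hybrid resource, applying the triangle inequality, and collapsing each term with one of the two non-increasing properties of $d$; the composite simulator for sequential composition is $\sigma_E\sigma_E'$ (with $\sigma_E'$ innermost), and for parallel composition it is $\sigma_E\|\sigma_E'$. The bookkeeping points you flag are exactly the right ones to be careful about, and they rest on axioms of the resource/converter algebra that the paper uses implicitly but never states: converters at distinct interfaces commute, $\|$ distributes over converter application, and --- one convention you use without naming it --- the filter attached to a parallel composition $\mathcal{R}\|\mathcal{R}'$ is the parallel composition $\sharp_E\|\sharp_E'$ of the individual filters. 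Your handling of the $\epsilon=0$ wrinkle in Part 3 is also apt: with the paper's strict inequality in Definition \ref{def:Construction}, the identity construction within $0$ is literally unsatisfiable, so either the definition must be read with $\leq$ or the identity statement must be read as holding within every $\epsilon>0$; the paper never addresses this. In short, your argument is sound and fills a gap the paper delegates to a citation, at the modest cost of having to posit the algebraic axioms that the underlying framework supplies.
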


This theorem captures formally the idea that if an ideal resource can be constructed from a real resource and a protocol, then the construction can safely be used instead of the ideal resource in an arbitrary context. 

\section{Constructing an authenticated channel from an advantage in channel noise}\label{sec:AuthenticationFromChannelNoise}

In this section, we show how Alice and Bob can use an advantage in channel noise to construct an authenticated channel. 

First, we look at the goal: the ideal authenticated channel that Alice and Bob want to construct. The resource $\mathcal{A}^n$ for transmitting $n$-bit authenticated messages from Alice to Bob is defined by the pseudo-code: 
\begin{enumerate}
	\item On input $m \in \{0,1\}^n$ from Alice, output $m$ to Bob and Eve.
	\item On input $m'$ from Eve, output $\bot$ to Bob.
\end{enumerate}
Thus, Bob gets the guarantee: if anything other than $\bot$ is output by the channel, then it must have come from Alice. 

Next, we look at the noisy channel that Alice and Bob have available. Let $0 \leq p < q \leq 1/2$ and consider the resource $\mathcal{N}^n_{p,q}$ defined by the pseudo-code: 
\begin{enumerate}
	\item On input $m \in \{0,1\}^n$ from Alice, draw $U_1, U_2, \dots U_n$ i.i.d. $Bernoulli(p)$ random variables and output $m+U = (m_1 + U_1, \dots, m_n + U_n)$ to Bob. Also output $m$ to Eve. 
	\item On input $m' \in \{0,1\}^n$ from Eve, draw $V_1, \dots, V_n$ i.i.d. $Bernoulli(q)$ random variables and output $m' + V=(m'_1 + V_1, \dots, m'_n + V_n)$ to Bob.  
\end{enumerate}
Thus, $n$-bit messages from Alice go through a binary symmetric channel with parameter $p$, while $n$-bit messages from Eve go through a binary symmetric channel with parameter $q$. 

To construct the ideal from the real resource, Alice and Bob use suitable encoding and decoding of messages. We will denote by $E^n$ Alice's encoding for transmission over $\mathcal{N}^n_{p,q}$, and by $D^n$ Bob's corresponding decoding. Our main result is the following:

\begin{theorem}\label{thm:Main}
	Let $0 \leq p < q \leq 1/2$. Then, for any $r < h(q) - h(p)$, for any $\epsilon>0$ and for all sufficiently large $n$, there exists a protocol $\pi^n = (E^n,D^n)$ such that $\mathcal{N}^n_{p,q} \xrightarrow{\pi^n, \epsilon} \mathcal{A}^{rn}$. 
\end{theorem}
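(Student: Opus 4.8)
The plan is to construct the encoding $E^n$ and decoding $D^n$ explicitly using a random linear code, and to verify the two conditions of Definition \ref{def:Construction} directly. The key intuition is an asymmetry exploited via typicality: a message from Alice passes through $\mathcal{N}^n_{p,q}$ with noise $Bernoulli(p)$, so the received error vector lands in the small typical set $T(n,p,\delta)$ of size at most $2^{n(h(p)+\delta)}$; whereas a message injected by Eve arrives corrupted by $Bernoulli(q)$ noise, whose typical set $T(n,q,\delta)$ is much larger, roughly $2^{n h(q)}$. The decoder will accept only if the received word is explainable by a codeword plus a $p$-typical error; since Eve's output is overwhelmingly likely to be $q$-typical and there are far fewer $p$-typical perturbations of codewords than the ambient space, Eve cannot reliably steer her word into the acceptance region.

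\textbf{Construction.} I would let $C \subseteq \{0,1\}^n$ be a linear code of dimension $rn$, so $|C| = 2^{rn}$, chosen (by a probabilistic/random-coding argument) so that the ``fattened'' decoding regions $c + T(n,p,\delta)$ for $c \in C$ are nearly disjoint and cover the received $p$-noisy words with high probability. Alice's encoder $E^n$ maps each message in $\{0,1\}^{rn}$ to its codeword in $C$. Bob's decoder $D^n$, on receiving $y$, outputs the (unique) message whose codeword $c$ satisfies $y \in c + T(n,p,\delta)$, and outputs $\bot$ if no such codeword exists or if the codeword is ambiguous. The counting that makes this work is: the total number of words Bob will accept is at most $|C| \cdot |T(n,p,\delta)| \leq 2^{rn} \cdot 2^{n(h(p)+\delta)} = 2^{n(r + h(p) + \delta)}$, and since $r < h(q) - h(p)$ we can pick $\delta$ small enough that $r + h(p) + \delta < h(q) - \delta'$, making the acceptance region negligible compared to the $q$-typical set of size near $2^{n h(q)}$.

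\textbf{Verifying the two conditions.} For condition 1 ($\epsilon$-close with Eve blocked), the filter $\sharp_E$ blocks Eve's injection, so I only need correctness of the code against $Bernoulli(p)$ noise: by Theorem \ref{thm:TypicalSequences} the error vector is $p$-typical with probability tending to $1$, and the random-coding argument ensures correct unique decoding with probability tending to $1$; the distinguishing advantage against $\flat_E \mathcal{A}^{rn}$ is then below $\epsilon$ for large $n$. For condition 2 (full access for Eve), I construct the simulator $\sigma_E$ which, on receiving Alice's message $m$ at the ideal channel's Eve-interface, internally simulates the $Bernoulli(p)$ channel to produce what Eve would see, and forwards any injection attempt by Eve to produce $\bot$ at Bob with high probability by refusing to let it through. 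The crux is bounding Eve's success probability: for any injected $m'$, I must show $\Pr\bigl(m' + V \in \bigcup_{c \in C}(c + T(n,p,\delta))\bigr)$ is small, where $V$ is $Bernoulli(q)$. This follows because the acceptance region has size at most $2^{n(r+h(p)+\delta)}$ while the measure that $m'+V$ places on any set is controlled by $q$-typicality, so the probability is roughly $2^{n(r + h(p) + \delta)} / 2^{n h(q)} \to 0$.

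\textbf{The main obstacle} I anticipate is the second condition, specifically showing that \emph{no} choice of injected $m'$ by Eve — even adaptively, and even a worst-case non-typical target — can concentrate the $Bernoulli(q)$ distribution onto the acceptance region. The honest-noise correctness (condition 1) is a standard Shannon-style typical-set decoding argument, but the security bound requires a uniform estimate over all of Eve's strategies. The right tool is again Theorem \ref{thm:TypicalSequences}: split the event by whether $m' + V$ is itself $q$-typical (the complementary event having vanishing probability), and on the typical part bound the number of acceptance-region points that are simultaneously $q$-typical relative to $m'$. Making this counting argument survive the supremum over all distinguishers and all of Eve's injections, and simultaneously exhibiting a single code $C$ that works against honest noise \emph{and} keeps the acceptance region small, is where the careful balancing of $r < h(q) - h(p)$ against the choice of $\delta$ and $n$ becomes essential.
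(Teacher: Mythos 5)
Your coding-theoretic core --- a random code of rate $r$, typical-set decoding, and the counting bound that the acceptance region has size at most $2^{rn}\cdot 2^{n(h(p)+\delta)}$ while each $q$-typical realization of Eve's noise has probability at most $2^{-n(h(q)-\delta)}$, with the atypical event split off --- is exactly the paper's argument for bounding $p_{de}$ and $p_{fa}$ (the paper uses a uniformly random codebook plus expurgation rather than a linear code, but nothing hinges on linearity), and your analysis of the Eve-blocked condition also matches. The genuine error is in your simulator. The resource $\mathcal{N}^n_{p,q}$ outputs Alice's input to Eve \emph{uncorrupted}: Eve sees the exact codeword, not a noisy version of it. Your simulator ``internally simulates the $Bernoulli(p)$ channel to produce what Eve would see,'' i.e.\ it presents $c_x+U$ with $U$ a vector of i.i.d.\ $Bernoulli(p)$ bits. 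This is distinguishable with advantage tending to $1$: the distinguisher inputs $x$ at Alice's interface and checks whether the string at Eve's interface equals $c_x$; in the real system this always holds, whereas under your simulator it holds only with probability $(1-p)^n$, which vanishes for $p>0$. The correct simulator (as in the paper) is deterministic: on receiving $x$ from the ideal resource it outputs the codeword $c_x$ to Eve. Relatedly, your statement that the simulator handles Eve's injection ``by refusing to let it through'' needs repair: the simulator must actually input \emph{something} (say, the all-zeros string of length $rn$) into $\mathcal{A}^{rn}$ so that Bob receives $\bot$; if the simulator forwards nothing, Bob produces no output at all in the ideal world, which is again distinguishable from the real world where Bob always outputs either a decoded message or $\bot$.

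A second, smaller gap: you assert that the two distances in Definition \ref{def:Construction} are below $\epsilon$, but you give no tool for evaluating a supremum over all (possibly randomized, adaptive) distinguishers. The paper reduces this to elementary probability via three lemmas: deterministic distinguishers suffice; for output-only resources the advantage is $\frac{1}{2}\|r-s\|_1$; and for one-input/one-output resources it is $\max_x \frac{1}{2}\|r(\cdot|x)-s(\cdot|x)\|_1$. With these, the two conditions evaluate exactly to $p_{de}$ and $\max(p_{de},p_{fa})$. Your uniform-over-$m'$ bound on $Pr\bigl(m'+V \in \bigcup_{c}(c+T(n,p,\delta))\bigr)$ is precisely the estimate to feed into the last lemma, so once the simulator is repaired your outline completes along the paper's lines.
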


To prove this theorem, we observe that there are two ways that the real system $E^n_A D^n_B \mathcal{N}^n_{p,q}$ can fail: 
\begin{enumerate}
	\item Alice sends a message to Bob, which he decodes incorrectly or rejects. We call this decoding error and denote the maximum probability of it occurring by $p_{de}$. 
	\item Eve sends a message to Bob, which he accepts and decodes. We call this false acceptance and denote the maximum probability of it occurring by $p_{fa}$
\end{enumerate}

Then, in the first part of the proof, we show that there exist suitable encoding for Alice and decoding for Bob such that $p_{de}, p_{fa}$ are both small. This is stated formally in the following proposition, which we prove in subsection \ref{subsec:GoodEncodingAndDecodingExist}: 
\begin{proposition}\label{prop:GoodEncodingAndDecodignExist}
	Let $0\leq p < q \leq 1/2$. Then, for any $r < h(q) - h(p)$, any $\epsilon>0$ and all sufficiently large $n$, there exist encoding and decoding of $rn$ bit messages to $n$ bit codewords such that $p_{de} < \epsilon$ and $p_{fa} < \epsilon$. 
\end{proposition}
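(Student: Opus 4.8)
The plan is to use a random coding argument. Fix small parameters $\delta, \delta' > 0$ to be pinned down at the end, set $M = 2^{rn}$, and draw codewords $c_1, \dots, c_M \in \{0,1\}^n$ independently and uniformly at random, encoding message $m_i$ as $E^n(m_i) = c_i$. For decoding, when Bob receives $y \in \{0,1\}^n$ he checks whether there is a \emph{unique} index $i$ for which the apparent noise $y - c_i$ lies in the typical set $T(n,p,\delta)$; if so he outputs $m_i$, and otherwise he rejects. The one structural fact I will use repeatedly is that the whole acceptance region $\bigcup_{i}(c_i + T(n,p,\delta))$ has size at most $M\,|T(n,p,\delta)| \le 2^{n(r+h(p)+\delta)}$ by Theorem \ref{thm:TypicalSequences}, and that this bound holds for \emph{every} codebook, not just a random one.

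For the decoding error I would run the standard channel coding argument. When Alice transmits $c_i$, Bob receives $y = c_i + U$ with $U$ a $Bernoulli(p)$ noise vector, and he decodes correctly unless either $U \notin T(n,p,\delta)$ or some other codeword $c_j$ satisfies $y - c_j \in T(n,p,\delta)$. The first event has probability tending to $0$ by Theorem \ref{thm:TypicalSequences}. For the second, since each $c_j$ with $j \ne i$ is uniform and independent of $(c_i, U)$, the probability that a fixed received word falls in $c_j + T(n,p,\delta)$ is $|T(n,p,\delta)|/2^n \le 2^{-n(1-h(p)-\delta)}$; a union bound over the $M-1$ competing codewords gives $2^{-n(1-h(p)-r-\delta)}$, which vanishes because $r < h(q)-h(p) \le 1-h(p)$ (using $q \le 1/2$, so $h(q)\le 1$). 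This controls the \emph{average} decoding error over codewords; to pass to the \emph{maximum} error I would discard the worst half of the codewords, a single bit of message length and hence a vanishing rate loss, which moreover only shrinks the acceptance region.

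The heart of the proof, and the step I expect to be the main obstacle, is the false acceptance bound, since this is where the authentication property and the rate $h(q)-h(p)$ actually come from. When Eve injects $m'$, Bob receives $y = m' + V$ with $V$ a $Bernoulli(q)$ noise vector, and false acceptance is contained in the event $y \in \bigcup_i (c_i + T(n,p,\delta))$. One cannot simply bound the mass of the acceptance region under Eve's output distribution directly, because the single most likely output (noise equal to zero) already carries probability $(1-q)^n$. The fix is to split on whether the realized noise $V$ is $\delta'$-typical for $Bernoulli(q)$: the atypical part contributes at most $Pr(V \notin T(n,q,\delta'))$, which vanishes by Theorem \ref{thm:TypicalSequences}, while on the typical part every outcome has probability at most $2^{-n(h(q)-\delta')}$, so that part contributes at most
\[ |T(n,p,\delta)|\,M\,2^{-n(h(q)-\delta')} \le 2^{-n(h(q)-h(p)-r-\delta-\delta')}. \]
Crucially this estimate is uniform over Eve's choice of $m'$ and holds for every codebook, so it survives expurgation, and it tends to $0$ precisely when $r < h(q)-h(p)$.

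To finish I would fix $r < h(q)-h(p)$, choose $\delta, \delta'$ small enough that both exponents $1-h(p)-r-\delta$ and $h(q)-h(p)-r-\delta-\delta'$ are strictly positive, and invoke the above to see that the expected decoding error over random codebooks tends to $0$. Hence some codebook achieves small average decoding error, and after expurgation small maximum decoding error, while simultaneously inheriting the universal false acceptance bound. Taking $n$ large then makes both $p_{de}$ and $p_{fa}$ fall below $\epsilon$.
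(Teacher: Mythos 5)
Your proposal is correct and follows essentially the same route as the paper: random uniform codewords with unique typical-set decoding, a union bound for the average decoding error, expurgation to get maximum error, and a split of Eve's noise into $Bernoulli(q)$-typical and atypical parts to bound false acceptance by $2^{-n(h(q)-h(p)-r-\delta-\delta')}$, uniformly over Eve's input and the codebook. The only cosmetic difference is the expurgation bookkeeping: the paper starts from $2^{rn+1}$ random codewords and keeps the best $2^{rn}$, so that exactly $rn$-bit messages survive, whereas discarding the worst half of $2^{rn}$ codewords leaves $(rn-1)$-bit messages --- a trivial adjustment.
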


In the second part of the proof, we show that if a real system has small probability of decoding error and of false acceptance, then this real system constructs the ideal system in the sense of Definition \ref{def:Construction}. In section \ref{subsec:FromErrorsToConstruction}, we show the following: 

\begin{proposition}\label{prop:FromErrorsToConstruction}
	Let $\pi^n = (E^n, D^n)$ be a protocol encoding $rn$ bit messages into $n$ bit codewords. Suppose the real system $E^n_A D^n_B \mathcal{N}^n_{p,q}$ has probability of decoding error $p_{de}$ and probability of false acceptance $p_{fa}$. Then, 
	\begin{enumerate}
		\item $d(E^n_A D^n_B\sharp_E \mathcal{N}^n_{p,q}, \flat_E \mathcal{A}^{rn}) = p_{de}$.
		\item There is a simulator $\sigma$ such that \[d(E^n_A D^n_B \mathcal{N}^n_{p,q}, \sigma_E \mathcal{A}^{rn}) = \max(p_{de}, p_{fa})\] 
	\end{enumerate}
\end{proposition}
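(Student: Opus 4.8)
The plan is to exhibit the filter $\flat$ and the simulator $\sigma$ explicitly, reduce each distinguishing problem to a statement about total variation distance between the possible outputs delivered to Bob, and then optimise over the distinguisher. The single unifying idea is that, because the encoder $E^n$ is a fixed, publicly known, deterministic map, Eve's passive view of the codeword can be reproduced perfectly on the ideal side from the message $m$ that $\mathcal{A}^{rn}$ hands to Eve; consequently the only source of distinguishing advantage is Bob's output, and the two failure modes $p_{de}$ and $p_{fa}$ govern exactly the two ways in which Bob's output can disagree between the real and ideal worlds.

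For the first part I would take $\flat_E$ to be the filter that blocks any injection at Eve's interface and, upon receiving the honest message $m$ from $\mathcal{A}^{rn}$, outputs $E^n(m)$ at Eve's outside interface; this makes Eve's view in $\flat_E\mathcal{A}^{rn}$ identical to her view in $E^n_A D^n_B \sharp_E \mathcal{N}^n_{p,q}$, where the noisy channel hands the codeword $E^n(m)$ to the (blocked) Eve. With Eve's views coupled to be equal, a distinguisher can gain only from Bob's output. For a fixed input $m \in \{0,1\}^{rn}$ chosen at Alice's interface, Bob outputs the constant $m$ in the ideal world and the random variable $D^n(E^n(m)+U)$ in the real world; the total variation distance between a point mass at $m$ and the law of $D^n(E^n(m)+U)$ is exactly $Pr[D^n(E^n(m)+U)\ne m]$. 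Maximising over the distinguisher's choice of $m$ gives the worst-case decoding-error probability, which is $p_{de}$ by definition, and the distinguisher that sends the worst message and tests whether Bob returns it attains this value; hence the distance equals $p_{de}$.

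For the second part I would define $\sigma_E$ to (i) reproduce Eve's passive view exactly as above, outputting $E^n(m)$ whenever $\mathcal{A}^{rn}$ reports the honest message $m$, and (ii) respond to any injection $m'$ at Eve's outside interface by feeding an injection into $\mathcal{A}^{rn}$, so that Bob receives $\bot$. Since $\sigma$ cannot influence Bob's interface, Bob's output in the ideal world is forced to be the honest message on a passive run and $\bot$ on an injection run. Coupling the two worlds so that Eve's observations agree, the distinguisher behaves identically in both until Bob's output is revealed, and the transcripts differ only through that output: on a passive run the disagreement is the decoding-error event, of probability at most $p_{de}$; on an injection run it is the false-acceptance event $D^n(m'+V)\ne\bot$, of probability at most $p_{fa}$ (note that this probability depends only on $m'$ and the fresh noise $V$, so Eve's observation of $E^n(m)$ cannot raise it). Because a single use of the channel delivers exactly one message to Bob, every run is either passive or an injection, so the probability that the transcripts differ is a convex combination of a quantity $\le p_{de}$ and a quantity $\le p_{fa}$, hence $\le \max(p_{de},p_{fa})$; the pure-honest and pure-injection distinguishers show this bound is met, giving the claimed equality.

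The step I expect to be the main obstacle is the upper bound in the second part, specifically arguing that the advantage is the \emph{maximum} rather than the \emph{sum} of the two error probabilities. This hinges on the channel semantics: if a single run could deliver to Bob both a noisy copy of Alice's codeword and Eve's injected word, a distinguisher could test for a decoding error and a false acceptance simultaneously and obtain advantage $1-(1-p_{de})(1-p_{fa})$, which exceeds the maximum. The proof must therefore use that Bob receives a single output per run --- equivalently, that Eve either forwards Alice's transmission or replaces it --- so that the passive and injection events are mutually exclusive, and must combine this with the coupling above to neutralise adaptive distinguishers that choose their injection only after observing $E^n(m)$. Checking that $\sigma$ and $\flat$ are admissible converters, and that $\bot$ is handled consistently with the absence of a delivery, are the remaining points to verify, but they are routine once the single-delivery dichotomy is in place.
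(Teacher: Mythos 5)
Your proposal is correct and follows essentially the same route as the paper: the same simulator (encode the leaked message $x$ into its codeword $c_x$, map any injection from Eve to a dummy ideal injection so Bob gets $\bot$), reduction to a per-input comparison of Bob's output distributions, and the observation that passive and injection runs are mutually exclusive, which yields $\max(p_{de},p_{fa})$ rather than a sum. The only cosmetic difference is that you argue via a coupling of Eve's views where the paper invokes its lemmas on deterministic distinguishers and one-input/one-output resources (Lemmas \ref{lemma:DeterministicDistinguisher}--\ref{lemma:AdvantageForOneInputOneOutput}), and you make explicit the single-use channel semantics that the paper leaves implicit in restricting inputs to the forms $(x,\textit{no-input})$ and $(\textit{no-input},z)$.
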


Now, we can complete the proof of Theorem \ref{thm:Main}: it follows immediately from Propositions \ref{prop:GoodEncodingAndDecodignExist} and \ref{prop:FromErrorsToConstruction}. All that is left to do is to prove the two propositions, which we do in the following subsections. 

\subsection{Good encoding and decoding exist}\label{subsec:GoodEncodingAndDecodingExist}

In this section, we show that encoding for Alice and decoding for Bob exist that make the probabilities of decoding error and false acceptance both small, thereby proving Proposition \ref{prop:GoodEncodingAndDecodignExist}. We follow the proof of the noisy channel coding theorem \cite[Chapter 7]{cover2006elements} to bound the probability of decoding error, and perform an additional analysis to bound also the probability of false acceptance. 

The encoding for Alice consists of selecting $2^{rn}$ codewords $\{c_1, \dots, c_{2^{rn}}\} \subset \{0,1\}^n$. The decoding for Bob will be typical sequence decoding: Bob will decode the set of output sequences $c_i + T(n,p,\delta)$ to message $i$. More precisely, Bob's decoding can be described by the pseudo-code "on input $y$, if there is a unique $i$ such that $y \in c_i + T(n,p,\delta)$ then output $i$, otherwise output $\bot$."

Now, given $r < h(q)-h(p)$ and $\epsilon > 0$, we choose $\delta < (h(q)-h(p)-r)/3$ and we use the probabilistic method to show the existence of two codebooks for Alice: a codebook of $2^{rn+1}$ codewords achieving an average probability of decoding error at most $\epsilon/2$, and a codebook of $2^{rn}$ codewords achieving a maximum probability of decoding error at most $\epsilon$. 

We focus on the first codebook. We choose random variables $C_1, C_2, \dots, C_{2^{rn+1}}$ independently, uniformly from $\{0,1\}^n$ and let this be our codebook. Now suppose Alice inputs $C_i$ into the channel, and Bob gets $Y=C_i+U$. By the union bound, the probability of decoding error is then 
\begin{multline*} 
Pr(\textit{decoding error on input } C_i) \\
 \leq Pr( Y \notin C_i + T(n,p,\delta)) + \sum_{j \neq i} Pr( Y \in C_j + T(n,p,\delta)) 
\end{multline*}
The first term goes to zero as $n$ goes to infinity, by the theorem of typical sequences \ref{thm:TypicalSequences}. The second term is bounded by \[ 2^{rn+1} \frac{|T(n,p,\delta)|}{2^{n}} \leq 2 \cdot 2^{-n(1-r-h(p)-\delta)}\] which also goes to zero as $n$ goes to infinity. 

Thus, for a random codebook \[ \frac{1}{2^{rn+1}} \sum_{i=1}^{2^{rn+1}} Pr (\textit{decoding error on input } C_i)  \] goes to zero as $n$ goes to infinity. Therefore, for any $\epsilon$ and for all sufficiently large $n$, there exist particular codebooks $\{c_1, \dots, c_{2^{rn+1}}\}$ such that \[ \frac{1}{2^{rn+1}} \sum_{i=1}^{2^{rn+1}} Pr(\textit{decoding error on input } c_i) < \frac{\epsilon}{2}\] Picking the best $2^{rn}$ codewords of such a codebook, we obtain a codebook of size $2^{rn}$ such that the maximum probability of decoding error is at most $\epsilon$. 

Next, we need to analyze the probability that Bob accepts a message coming form Eve. The set of channel outputs that Bob accepts is \[ S \subseteq \cup_{i=1}^{2^{rn}} (c_i + T(n,p,\delta)) \] What is the probability that Eve's message is corrupted to an output in this set? 

Suppose Eve inputs $z$ into the channel, resulting in output $Y=z+V$ for Bob. Then 
\begin{multline*}
Pr(Y \in S) \leq Pr (V \textit{ is not $\delta$-typical}) + |S| 2^{-n(h(q) - \delta)} \\
\leq Pr (V \textit{ is not $\delta$-typical}) + 2^{rn} 2^{n(h(p) + \delta)} 2^{-n(h(q) - \delta)} 
\end{multline*}
Both of these terms go to zero as $n$ goes to infinity. Thus, for all sufficiently large $n$ the probability of false acceptance will be below $\epsilon$.

\subsection{Construction in the sense of Abstract Cryptography.}\label{subsec:FromErrorsToConstruction}

In the previous subsection, we established that it is possible for a real system to achieve simultaneously low probabilities of decoding error and of false acceptance. In this subsection, we show that these low probabilities imply that the real system constructs the ideal system in the sense of Abstract Cryptography. We will do this by proving Proposition \ref{prop:FromErrorsToConstruction}. 

First, it is helpful to take a step back and develop some general tools for evaluating the distance $d(\cdot,\cdot)$ between resources. Our first lemma shows that we can restrict attention to distinguishers following a deterministic strategy: 
\begin{lemma}\label{lemma:DeterministicDistinguisher}
	Let $\mathcal{R}, \mathcal{S}$ be two resources. Then, for any $\epsilon > 0$, there is a deterministic distinguisher $\mathcal{D}$ such that \[ |Pr(\mathcal{DR}=1) - Pr(\mathcal{DS}=1)| > d(\mathcal{R,S}) - \epsilon \]
\end{lemma}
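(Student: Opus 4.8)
The plan is to exploit that the distance $d(\mathcal{R},\mathcal{S})$ is defined as a supremum over \emph{all} distinguishers, and that every randomized distinguisher is, in effect, a probabilistic mixture of deterministic ones. Concretely, I would first invoke the definition of the supremum: for the given $\epsilon>0$ there is some distinguisher $\mathcal{D}'$, not necessarily deterministic, with $|Pr(\mathcal{D}'\mathcal{R}=1)-Pr(\mathcal{D}'\mathcal{S}=1)| > d(\mathcal{R},\mathcal{S})-\epsilon$. The goal is then to replace $\mathcal{D}'$ by a deterministic distinguisher that performs at least as well.

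Next I would make precise the idea that $\mathcal{D}'$ is a mixture of deterministic strategies. All of the internal coin flips that $\mathcal{D}'$ performs over the course of its interaction with a resource can be consolidated into a single random seed $\omega$ drawn, once and for all, at the start according to some distribution $P_\omega$. Conditioned on $\omega = w$, the distinguisher behaves deterministically; call this deterministic distinguisher $\mathcal{D}_w$. Importantly, this fixing of the seed concerns only the distinguisher's own randomness and leaves the randomness internal to the resource (for instance the bit flips of the channel) untouched. Consequently $Pr(\mathcal{D}'\mathcal{R}=1) = \sum_w P_\omega(w)\,Pr(\mathcal{D}_w\mathcal{R}=1)$, and likewise for $\mathcal{S}$, so the advantage of $\mathcal{D}'$ equals the expectation over $w \sim P_\omega$ of the per-seed advantage $g(w) := Pr(\mathcal{D}_w\mathcal{R}=1) - Pr(\mathcal{D}_w\mathcal{S}=1)$ of $\mathcal{D}_w$.

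I would then finish with a simple averaging argument. Since $|\mathbb{E}_w[g(w)]| \le \mathbb{E}_w[|g(w)|]$, and a random variable always attains a value at least as large as its mean, there exists a particular seed $w^*$ with $|g(w^*)| \ge \mathbb{E}_w[|g(w)|] \ge |\mathbb{E}_w[g(w)]| = |Pr(\mathcal{D}'\mathcal{R}=1)-Pr(\mathcal{D}'\mathcal{S}=1)| > d(\mathcal{R},\mathcal{S})-\epsilon$. Taking $\mathcal{D} := \mathcal{D}_{w^*}$, which is deterministic, yields the claimed distinguisher.

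The step I expect to be the main obstacle is making the decomposition of the second paragraph rigorous in the interactive setting: one must argue that all the randomness a distinguisher uses throughout an adaptive, multi-round interaction with a resource can indeed be pushed to a single up-front seed, that the resulting per-seed behavior is genuinely deterministic, and that the advantage then decomposes as a clean expectation over the seed. Everything after that is elementary. (If $\omega$ ranges over a continuum, the sum becomes an integral and the phrase ``attains a value at least its mean'' is read as an essential-supremum statement, but the argument is unchanged.)
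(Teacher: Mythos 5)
Your proposal is correct and takes essentially the same route as the paper's own proof: fix a near-optimal distinguisher via the definition of the supremum, view it as a probabilistic mixture of deterministic distinguishers (your random-seed formalization makes this explicit), and conclude by an averaging argument that some deterministic component of the mixture achieves at least the mixture's advantage. The paper compresses this into three lines, leaving the mixture decomposition and the averaging step implicit, whereas you spell them out — including the point that fixing the seed leaves the resource's internal randomness untouched, which is exactly the detail the paper's terse wording glosses over.
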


\begin{proof}
	Let $\mathcal{D}'$ be any distinguisher such that 
	\[ |Pr(\mathcal{D'R}=1) - Pr(\mathcal{D'S}=1)| > d(\mathcal{R,S}) - \epsilon \]
	If $\mathcal{D}'$ is deterministic we are done. Otherwise, $\mathcal{D}'$ is a probabilistic mixture of deterministic distinguishers, and there must exist a deterministic $\mathcal{D}$ in this mixture such that 
	\[|Pr(\mathcal{DR}=1) - Pr(\mathcal{DS}=1)|  
	\geq |Pr(\mathcal{D'R}=1) - Pr(\mathcal{D'S}=1) > d(\mathcal{R,S}) - \epsilon \]
\end{proof}

Next, we focus on evaluating the distance between resources that provide no interaction or only one round of interaction. It is known that for resources that provide a single output, the distinguishing advantage is half the $l_1$ distance between the output probability distributions: 
\begin{lemma}\label{lemma:AdvantageForSingleOutput}
	Let $\mathcal{R,S}$ be two resources that take no input and provide an output in some discrete set. Then, using $r,s$ to denote the probability distributions over outputs, we have \[ d(\mathcal{R},\mathcal{S}) = \frac{1}{2} \|r-s\|_1 = \frac{1}{2} \sum_x |r(x)-s(x)| \]
\end{lemma}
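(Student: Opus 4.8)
The plan is to prove Lemma~\ref{lemma:AdvantageForSingleOutput} by a direct optimization argument over all distinguishers. Since $\mathcal{R}$ and $\mathcal{S}$ take no input, a distinguisher interacting with either resource simply receives a single output $x$ and must decide whether to output $0$ or $1$. By Lemma~\ref{lemma:DeterministicDistinguisher}, it suffices to consider deterministic distinguishers (taking the supremum over deterministic strategies yields the same value up to arbitrarily small $\epsilon$, hence the same supremum). A deterministic distinguisher is therefore completely described by the set $A$ of outputs on which it returns $1$; that is, it chooses a subset $A$ of the output set, and on receiving $x$ it outputs $1$ if $x \in A$ and $0$ otherwise.

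First I would express the distinguishing advantage for a fixed acceptance set $A$. For such a distinguisher we have $Pr(\mathcal{DR}=1) = \sum_{x \in A} r(x)$ and $Pr(\mathcal{DS}=1) = \sum_{x \in A} s(x)$, so that
\[
|Pr(\mathcal{DR}=1) - Pr(\mathcal{DS}=1)| = \Bigl| \sum_{x \in A} (r(x)-s(x)) \Bigr|.
\]
The task then reduces to maximizing this quantity over all subsets $A$ of the output set. The key observation is that to maximize $\sum_{x \in A}(r(x)-s(x))$ one should include exactly those $x$ for which $r(x)-s(x) > 0$, giving the maximum value $\sum_{x : r(x)>s(x)} (r(x)-s(x))$; symmetrically, to make the sum as negative as possible one picks the $x$ with $r(x)-s(x)<0$. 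Because $r$ and $s$ are both probability distributions, $\sum_x (r(x)-s(x)) = 0$, which forces the total positive mass and the total negative mass to be equal, each being exactly $\frac{1}{2}\sum_x |r(x)-s(x)| = \frac{1}{2}\|r-s\|_1$. Hence the optimal acceptance set achieves advantage $\frac{1}{2}\|r-s\|_1$.

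To finish I would argue both inequalities. For the upper bound, for any $A$ the sum $\sum_{x\in A}(r(x)-s(x))$ is at most the sum of positive terms and at least the negative of the sum of negative terms, so its absolute value never exceeds $\frac{1}{2}\|r-s\|_1$; taking the supremum gives $d(\mathcal{R},\mathcal{S}) \leq \frac{1}{2}\|r-s\|_1$. For the lower bound, the explicit choice $A = \{x : r(x)>s(x)\}$ realizes the value $\frac{1}{2}\|r-s\|_1$, so the supremum is at least this much. The main subtlety, rather than any obstacle, is the bookkeeping that ties the one-sided optimum to half the $l_1$ distance: this is where the fact that both $r$ and $s$ sum to one is essential, since it equates the positive and negative parts of $r-s$. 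I would also note that if the output set is countably infinite one should check that the relevant sums converge, but since both are dominated by probability distributions summing to one this is immediate.
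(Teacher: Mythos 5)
Your proof is correct, and its core coincides with the paper's: both identify the optimal test as the one that accepts exactly on $\{x : r(x) > s(x)\}$, and both use the fact that $r$ and $s$ each sum to one to equate the positive and negative parts of $r-s$ with $\frac{1}{2}\|r-s\|_1$. The difference lies in how the upper bound is handled. You invoke Lemma~\ref{lemma:DeterministicDistinguisher} to reduce to deterministic distinguishers and then optimize over acceptance sets $A$, which is a clean, modular argument. The paper instead proves the upper bound self-contained: it takes an \emph{arbitrary} (possibly probabilistic) distinguisher $\mathcal{D}'$ with response function $t(x)$ and shows directly that the gap between the optimal distinguisher and $\mathcal{D}'$ equals
\[
\sum_{x : r(x) > s(x)} (1-t(x))\,(r(x)-s(x)) \;+\; \sum_{x : r(x) \leq s(x)} t(x)\,(s(x)-r(x)) \;\geq\; 0,
\]
a sum of manifestly nonnegative terms. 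What each approach buys: yours reuses existing machinery and keeps the optimization purely combinatorial (over subsets), while the paper's avoids any dependence on Lemma~\ref{lemma:DeterministicDistinguisher} for this lemma and handles randomized strategies in one stroke. Both are sound; your appeal to Lemma~\ref{lemma:DeterministicDistinguisher} is legitimate since that lemma gives deterministic distinguishers achieving the supremum up to any $\epsilon>0$, hence the supremum over deterministic distinguishers equals $d(\mathcal{R},\mathcal{S})$.
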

\begin{proof}
	Let $\mathcal{D}$ be the distinguisher given by pseudo-code "On input $x$, if $r(x) > s(x)$ output 1, else output 0." Then, 
	\[Pr(\mathcal{DR} = 1) - Pr (\mathcal{DS} = 1) 
	 = \sum_{x : r(x) > s(x)} (r(x) - s(x))  = \frac{1}{2} \|r-s\|_1 \]
	Now let $\mathcal{D}'$ be any other distinguisher. Without loss of generality, assume $Pr(\mathcal{D'R}=1) \geq Pr(\mathcal{D'S}=1)$ (otherwise flip the output bit of $\mathcal{D}'$). Let $t(x)$ be the probability that $\mathcal{D}'$ outputs 1 on input $x$. Then
	\begin{multline*}
		Pr(\mathcal{DR}=1) - Pr(\mathcal{DS}=1) 
		- Pr(\mathcal{D'R}=1) + Pr(\mathcal{D'S}=1) \\
		= \sum_{x : r(x) > s(x)} (1-t(x)) (r(x)-s(x)) 
		+ \sum_{x : r(x) \leq s(x)} t(x) (s(x)-r(x)) 	\geq 0
	\end{multline*}
\end{proof}

Now we extend this result to resources that take one input and return one output. 
\begin{lemma}\label{lemma:AdvantageForOneInputOneOutput}
	Let $\mathcal{R,S}$ be two resources that take an input in some discrete set and provide an output in some (possibly different) discrete set. Let $r(y|x), s(y|x)$ be the respective conditional probabilities over outputs given inputs. Then, 
	\[  d(\mathcal{R,S}) = \max_x \frac{1}{2} \|r(\cdot|x) - s(\cdot|x)\|_1\]
\end{lemma}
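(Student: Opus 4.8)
The plan is to reduce the distance between these one-input/one-output resources to a family of single-output distances indexed by the input, and then apply the previous lemma. The key observation is that a distinguisher interacting with $\mathcal{R}$ or $\mathcal{S}$ first chooses an input $x$ (possibly at random), then observes the output $y$, and finally produces its bit. By Lemma \ref{lemma:DeterministicDistinguisher}, it suffices to consider deterministic distinguishers, so the choice of input is a fixed value $x$, and the final bit is a fixed function of the observed output $y$.

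\begin{proof}
By Lemma \ref{lemma:DeterministicDistinguisher}, it suffices to evaluate the supremum over deterministic distinguishers. A deterministic distinguisher for a one-input/one-output resource is specified by a fixed input $x$ to feed to the resource, together with a decision function assigning an output bit to each possible output $y$. Once $x$ is fixed, the resource $\mathcal{R}$ (respectively $\mathcal{S}$) becomes a no-input resource whose output distribution is $r(\cdot|x)$ (respectively $s(\cdot|x)$), and the decision function is exactly a distinguisher for these no-input resources. Hence, for each fixed $x$, the best distinguishing advantage achievable is, by Lemma \ref{lemma:AdvantageForSingleOutput}, equal to $\frac{1}{2}\|r(\cdot|x)-s(\cdot|x)\|_1$. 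Taking the supremum over the distinguisher's choice of $x$ (a deterministic distinguisher picks a single $x$, and a probabilistic mixture cannot do better than its best component) gives
\[ d(\mathcal{R},\mathcal{S}) = \max_x \frac{1}{2}\|r(\cdot|x)-s(\cdot|x)\|_1 .\]
The maximum is attained because a deterministic distinguisher picks the best single input.
\end{proof}

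The main thing to get right is the clean factoring of a deterministic distinguisher into an input choice followed by an output-dependent decision, and the argument that randomizing the input choice cannot help. This is where I would lean on Lemma \ref{lemma:DeterministicDistinguisher}: reducing to deterministic strategies collapses the input choice to a single fixed $x$, so the outer supremum over distinguishers splits cleanly into an outer maximum over $x$ and, for each $x$, an inner supremum over decision functions that is already resolved by Lemma \ref{lemma:AdvantageForSingleOutput}. I do not anticipate a serious obstacle here; the only subtlety worth stating explicitly is why $\max$ rather than $\sup$ is justified (the input set is where the optimization happens, and a deterministic distinguisher realizes the best input), and why a mixture over inputs is dominated by its best pure input, which is the same convexity argument already used in the proof of Lemma \ref{lemma:DeterministicDistinguisher}.
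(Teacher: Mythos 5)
Your proof is correct and follows essentially the same route as the paper's: reduce to deterministic distinguishers via Lemma \ref{lemma:DeterministicDistinguisher}, observe that fixing the input $x$ turns each resource into a no-input resource so that Lemma \ref{lemma:AdvantageForSingleOutput} gives the advantage $\frac{1}{2}\|r(\cdot|x)-s(\cdot|x)\|_1$, and conclude by optimizing over $x$. Your write-up is slightly more explicit than the paper's about factoring a deterministic distinguisher into an input choice plus a decision function and about why randomizing over inputs cannot help, but the argument is the same.
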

\begin{proof}
	From Lemma \ref{lemma:DeterministicDistinguisher} we know that we can restrict attention to deterministic distinguishers. Now, we consider a deterministic distinguisher between $\mathcal{R}$ and $\mathcal{S}$ whose strategy is to enter input $x$. The distinguisher is now in a position to try to tell the difference between the output distributions $r(\cdot|x)$ and $s(\cdot|x)$; by Lemma \ref{lemma:AdvantageForSingleOutput} we know that the best advantage of such a distinguisher is \[ \frac{1}{2} \|r(\cdot|x) - s(\cdot|x)\|_1 \] To complete the proof, it remains to observe that the best distinguishing advantage between $\mathcal{R}$ and $\mathcal{S}$ is obtained by the deterministic distinguisher that uses the optimal input. 
\end{proof}

Now, we can complete the proof of Proposition \ref{prop:FromErrorsToConstruction}:
\begin{proof}
	First, we show that $d(E^n_AD^n_B\sharp_E \mathcal{N}^n_{p,q} , \flat_E\mathcal{A}^{rn}) = p_{de}$. Both resources take a single input $x \in \{0,1\}^{rn}$ at Alice's interface and return a single output $y \in \{0,1\}^{rn}$ at Bob's interface. The ideal resource always has $y=x$, while the real resource occasionally makes an error in the transmission; thus, from Lemma \ref{lemma:AdvantageForOneInputOneOutput}, we have 
	\[d(E^n_AD^n_B\sharp_E\mathcal{N}^n_{p,q},\flat_E\mathcal{A}^{rn}) 
	= \max_x Pr(E^n_AD^n_B\sharp_E\mathcal{N}^n_{p,q} \text{ makes error on input } x) = p_{de} \]
	
	Next, we consider the second part of Proposition \ref{prop:FromErrorsToConstruction}. First, we have to choose a suitable simulator. When Alice inputs a message $x \in \{0,1\}^{rn}$ to the real resource $E^n_AD^n_B\mathcal{N}^n_{p,q}$, the codeword $c_x$ comes out uncorrupted at Eve's interface. On the other hand, when Alice inputs $x$ to the ideal resource $\mathcal{A}^{rn}$, $x$ itself appears at Eve's interface. Therefore, we want $\sigma$ to take $x$ and convert it to the corresponding codeword $c_x$. Further, the real resource $E^n_AD^n_B\mathcal{N}^n_{p,q}$ expects inputs of size $n$ at Eve's interface, while the ideal resource $\mathcal{A}^{rn}$ expects inputs of size $rn$. Therefore, the simulator $\sigma$ has to convert Eve's inputs of size $n$ into inputs of size $rn$. Since $\mathcal{A}^{rn}$ outputs an error to Bob on any input from Eve, it does not matter how $\sigma$ maps $\{0,1\}^n$ to $\{0,1\}^{rn}$; thus, we can assume for simplicity that $\sigma$ maps any $n$ bit input from Eve to a sequence of $rn$ zeros. To summarize, we choose the simulator $\sigma$ given by the pseudo-code: "On input $x$ at the inside interface, output $c_x$ at the outside interface. On input $z$ at the outside interface, output $rn$ zeros at the inside interface."
	
	Now, we have to evaluate $d(E^n_AD^n_B\mathcal{N}^n_{p,q}, \sigma_E \mathcal{A}^{rn})$. From the point of view of a distinguisher, both the real and the ideal resources are single input single output devices: the inputs $(Alice-In, Eve-In)$ are of the form $(x, "no-input")$ or $("no-input", z)$ and the outputs $(Eve-Out, Bob-Out)$ are of the form $(c_x, y)$ or $("no-output", y)$. Thus, Lemma \ref{lemma:AdvantageForOneInputOneOutput} applies. If the distinguisher chooses an input of the form $(x, "no-input")$, then his maximum advantage is the probability that the real system makes decoding error on input $x$ from Alice. If the distinguisher chooses an input of the form $("no-input", z)$, then his maximum advantage is the probability that the real system does not output an error to Bob. Thus, we obtain \[ d(E^n_AD^n_B\mathcal{N}^n_{p,q}, \sigma_E \mathcal{A}^{rn}) = \max (p_{de},p_{fa}) \] as needed. 
\end{proof}

\section{Extensions}\label{sec:Extensions}

In this section we consider some extensions of the results of the previous section. First, we consider an extension to more general models of a noisy channel. Then, we consider the possibility of proving a converse result.  Next, we consider an extension that allows the adversary to block messages from Alice to Bob. Finally, we consider the computational efficiency of encoding and decoding. 

\subsection{More general models of a noisy channel}

Let $\mathbb{X},\mathbb{Y}, \mathbb{Z}$ be finite alphabets for Alice's input, Bob's output and Eve's input respectively. Let $P(\cdot | \cdot)$ and $Q( \cdot | \cdot)$ be two sets of conditional probabilities and consider the real resource $\mathcal{N}^n_{P,Q}$ for transmitting $n$-symbol words given by the pseudo-code: 
\begin{enumerate}
	\item On input $x= (x_1, \dots, x_n)$ from Alice, output $Y=(Y_1, \dots, Y_n)$ to Bob, where $Y_i$ is drawn independently according to the distribution $P(\cdot| x_i)$. Also output $x$ to Eve. 
	\item On input $z=(z_1, \dots, z_n)$ from Eve, output $Y=(Y_1, \dots, Y_n)$ to Bob, where $Y_i$ is drawn independently according to the distribution $Q(\cdot|z_i)$. 
\end{enumerate}
Thus, Alice's messages pass through a discrete memoryless channel with transition probabilities $P$ and Eve's messages pass through a discrete memoryless channel with transition probabilities $Q$. 

Using essentially the same argument as the proof of Theorem \ref{thm:Main} in Section \ref{sec:AuthenticationFromChannelNoise} we obtain: 
\begin{theorem}\label{thm:GeneralDMC}
	For every 
	\begin{equation}\label{eq:RatesForWhichTheProofApplies}
	r < \sup_{P_X} \Big(\min \{I_P(X;Y), \min_z H_Q(Y|Z=z) -H_P(Y|X) \}\Big)  
	\end{equation}
	for every $\epsilon > 0$ and for all sufficiently large $n$, there exist a protocol
	 $\pi^n = (E^n, D^n)$ such that $\mathcal{N}_{P,Q}^n \xrightarrow{\pi, \epsilon} \mathcal{A}^{rn}$. 
\end{theorem}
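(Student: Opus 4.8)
The plan is to mirror the structure of the proof of Theorem~\ref{thm:Main} exactly, replacing the binary-symmetric-channel machinery with the corresponding tools for general discrete memoryless channels. Just as before, the argument factors into two independent pieces: a coding-theoretic step (analogous to Proposition~\ref{prop:GoodEncodingAndDecodignExist}) that produces an encoding and decoding achieving both small decoding error $p_{de}$ and small false acceptance $p_{fa}$, and an abstract-cryptography step (Proposition~\ref{prop:FromErrorsToConstruction}) that converts these two small probabilities into a construction in the sense of Definition~\ref{def:Construction}. The second piece requires essentially no modification: the reduction to single-input/single-output resources via Lemmas~\ref{lemma:DeterministicDistinguisher}--\ref{lemma:AdvantageForOneInputOneOutput}, the choice of the simulator $\sigma$ mapping Alice's message $x$ to its codeword and mapping Eve's inputs to a dummy string, and the conclusion that the two distances equal $p_{de}$ and $\max(p_{de},p_{fa})$ respectively, all go through verbatim because they never used the specific form of $\mathcal{N}^n_{p,q}$. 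So the whole content lies in re-establishing the coding step for the channels $P$ and $Q$.

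For the coding step I would again use the probabilistic method with typical-sequence decoding, but now relative to a fixed input distribution $P_X$ on $\mathbb{X}$. First I would fix $P_X$ and generate $2^{rn+1}$ codewords i.i.d. according to $P_X^{\otimes n}$. Bob's decoder declares message $i$ if the received word $Y$ is jointly typical with the unique codeword $c_i$ under the joint distribution induced by $P_X$ and the channel $P$; otherwise he outputs $\bot$. The decoding-error analysis is precisely the achievability half of the noisy channel coding theorem of Cover and Thomas \cite[Chapter~7]{cover2006elements}: the probability that the true codeword fails to be jointly typical with $Y$ vanishes by the joint AEP, and the probability that some wrong codeword is jointly typical is bounded by $2^{rn+1} 2^{-n(I_P(X;Y)-\delta')}$, which vanishes provided $r < I_P(X;Y)$. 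This explains the first term in the minimum in \eqref{eq:RatesForWhichTheProofApplies}. Passing to the best half of the codewords yields a code of size $2^{rn}$ with small \emph{maximum} decoding error, exactly as in the proof of Proposition~\ref{prop:GoodEncodingAndDecodignExist}.

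The false-acceptance analysis is where the second term in \eqref{eq:RatesForWhichTheProofApplies} enters, and this is the step to watch. Bob's acceptance region $S$ is contained in the union of the conditionally-typical output shells around the $2^{rn}$ codewords, so by the conditional typicality bounds its size satisfies $|S| \le 2^{rn} \cdot 2^{n(H_P(Y|X)+\delta)}$. When Eve inputs any $z=(z_1,\dots,z_n)$, the output is drawn from $\prod_i Q(\cdot|z_i)$, and on the typical set for that distribution every individual output sequence has probability at most $2^{-n(\frac1n\sum_i H_Q(Y|Z=z_i)-\delta)}$. The worst case over Eve's choice of $z$ is governed by the smallest per-symbol output entropy $\min_z H_Q(Y|Z=z)$, which is why that quantity, rather than an averaged mutual information, appears. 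Bounding $Pr(Y\in S)$ by the probability of atypicality plus $|S|$ times the maximal per-sequence probability gives a term of order $2^{n(r+H_P(Y|X)-\min_z H_Q(Y|Z=z)+2\delta)}$, which vanishes precisely when $r < \min_z H_Q(Y|Z=z) - H_P(Y|X)$. Taking $\delta$ small, combining the two constraints via a minimum, and then optimizing over the free choice of $P_X$ via a supremum yields exactly the rate region \eqref{eq:RatesForWhichTheProofApplies}. The main obstacle, and the place where care is genuinely needed, is handling Eve's \emph{adversarial} (rather than i.i.d.) choice of input symbols in the false-acceptance bound: one must verify that the typical-set and counting estimates survive when Eve may place different symbols $z_i$ in different positions, which is what forces the $\min_z$ over single-symbol output entropies and requires a uniform-in-$z$ version of the typicality argument rather than the fixed-distribution AEP used for Alice.
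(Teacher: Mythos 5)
Your proposal matches the paper's proof essentially step for step: the same decomposition into Proposition~\ref{prop:FromErrorsToConstruction} (carried over unchanged) plus a random-coding argument with jointly typical decoding and expurgation to the best $2^{rn}$ codewords, the same counting bound on Bob's acceptance region via conditional typicality, and the same atypicality-plus-counting bound for false acceptance, including the key point that Eve's adversarial, non-i.i.d.\ choice of $z$ forces the $\min_z H_Q(Y|Z=z)$ term and requires an AEP for independent but not identically distributed variables (which the paper handles in a footnote via Chebyshev's inequality). No gaps; this is the paper's argument.
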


In equation \eqref{eq:RatesForWhichTheProofApplies}, $I(\cdot;\cdot)$ denotes the mutual information, $H(\cdot|\cdot)$ denotes the conditional Shannon entropy, and the subscript $P$ or $Q$ denotes the probability mass function which is used to compute the corresponding entropic quantities. The supremum is taken over all probability mass functions $P_X$ on $\mathbb{X}$, where each choice of $P_X$, combined with the transition probabilities $P(\cdot|\cdot)$ induces a joint probability mass function $P_{XY}$ on $\mathbb{X} \times \mathbb{Y}$. 

For the case when both $P$ and $Q$ are weakly symmetric \cite[Section 7.2]{cover2006elements} (i.e. the vectors $P(\cdot|x)$ for different $x$ are permutations of each other and the sums $\sum_x P(y|x)$ are the same for all $y$, and similarly for $Q$), the right hand side of equation \eqref{eq:RatesForWhichTheProofApplies} simplifies to an expression with a nice intuitive interpretation: 
\begin{multline*}
\sup_{P_X} \Big( \min \{ I_P(X;Y), \min_z H_Q(Y|Z=z) - H_P(Y|X) \} \Big) \\
= H_Q (Y |Z=z) - H_P (Y|X=x) \\
= (\log |\mathbb{Y}| - H_P(Y|X=x)) - (\log |\mathbb{Y}| - H_Q(Y|Z=z)) 
= C_{A \rightarrow B} - C_{E \rightarrow B}
\end{multline*}
Thus, if both channels are weakly symmetric, Alice can transmit information to Bob at any rate up to the difference between the capacity of the channel from Alice to Bob and the capacity of the channel from Eve to Bob. 

We proceed to prove Theorem \ref{thm:GeneralDMC}. Again, we look at the two cases of decoding error and false acceptance. Proposition \ref{prop:FromErrorsToConstruction} from Section \ref{sec:AuthenticationFromChannelNoise} carries over to this setting as well, because its proof does not rely on the size of the alphabets at the three terminals. What remains to be done is to show that low probabilities of decoding error and false acceptance are simultaneously achievable. We have the following: 
\begin{proposition}\label{prop:GoodEncodingDecodingGeneralCase}
	For any \[ r < \sup_{P_X} \Big( \min \{I_P(X;Y), \min_z H_Q(Y|Z=z) - H_P(Y|X)\} \Big) \] any $\epsilon>0$ and all sufficiently large $n$, there exists encoding and decoding of $rn$ bit messages into $n$ symbol codewords such that $p_{de}<\epsilon$ and $p_{fa} < \epsilon$. 
\end{proposition}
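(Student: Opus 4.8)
The plan is to follow the proof of Proposition~\ref{prop:GoodEncodingAndDecodignExist} closely, replacing the binary-symmetric-channel estimates and typical-sequence decoding by their discrete-memoryless-channel analogues based on jointly typical decoding. First I would fix an input distribution $P_X$ on $\mathbb{X}$ that comes within a small margin of the supremum, so that the target rate satisfies $r < \min\{I_P(X;Y), \min_z H_Q(Y|Z=z) - H_P(Y|X)\}$ for this particular $P_X$; by definition of the supremum such a $P_X$ exists, and it suffices to prove the proposition for this $r$ and $P_X$. Alice's codebook is produced by the probabilistic method: draw $2^{rn+1}$ codewords, each of whose $n$ symbols is chosen i.i.d.\ according to $P_X$. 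Bob decodes by joint typicality with respect to the law $P_{XY}$ induced by $P_X$ and the transition probabilities $P(\cdot|\cdot)$: on input $y$ he outputs the unique $i$ with $(c_i,y)$ jointly $\delta$-typical, and $\bot$ otherwise.

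The decoding-error half is exactly the achievability direction of the channel coding theorem \cite[Chapter 7]{cover2006elements}. By the joint AEP the probability that the transmitted pair fails to be jointly typical vanishes, and the expected number of spurious codewords jointly typical with $y$ is at most $2^{rn+1}\,2^{-n(I_P(X;Y)-3\delta)}$, which vanishes because $r<I_P(X;Y)$. Averaging over the random codebook and then discarding the worse half of the codewords yields, for all large $n$, a codebook of size $2^{rn}$ whose maximum probability of decoding error is below $\epsilon$, just as in the binary case.

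The genuinely new work is the false-acceptance bound. I would first bound the size of Bob's accepting set $S$. Since $S$ is contained in the union over messages of the conditionally typical shells $\{y:(c_i,y)\text{ jointly typical}\}$, and each such shell (attached to an $x$-typical codeword) has size at most $2^{n(H_P(Y|X)+\delta)}$, we get $|S|\leq 2^{n(r+H_P(Y|X)+\delta)}$. Next, for an arbitrary adversarially chosen Eve input $z=(z_1,\dots,z_n)$ the output $Y$ has the product law $\prod_i Q(\cdot|z_i)$. Writing $-\log Pr_Q(Y=y\mid z)=\sum_i\bigl(-\log Q(y_i|z_i)\bigr)$ as a sum of independent terms with mean $\sum_i H_Q(Y|Z=z_i)\geq n\min_z H_Q(Y|Z=z)$, a weak-law/concentration estimate shows that, except on an event of vanishing $Q$-probability, the realized output obeys $Pr_Q(y\mid z)\leq 2^{-n(\min_z H_Q(Y|Z=z)-\delta)}$. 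Combining the two estimates,
\[ Pr_Q(Y\in S\mid z)\leq o(1)+|S|\,2^{-n(\min_z H_Q(Y|Z=z)-\delta)}\leq o(1)+2^{n(r+H_P(Y|X)-\min_z H_Q(Y|Z=z)+2\delta)}, \]
which vanishes uniformly in $z$ because $r<\min_z H_Q(Y|Z=z)-H_P(Y|X)$ once $\delta$ is small. Hence $p_{fa}<\epsilon$ for all large $n$.

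I expect the main obstacle to be this false-acceptance step, and within it two points require care. The first is that the per-output probability bound must hold uniformly over \emph{all} adversarial inputs $z$; this is precisely where $\min_z H_Q(Y|Z=z)$ enters, since Eve's best move is to make the output distribution as concentrated as possible, yet the per-symbol output entropy she can achieve is always at least $\min_z H_Q(Y|Z=z)$. The second is the bookkeeping of the several $\delta$'s appearing in the size of $S$, in the per-output bound, and in the two rate constraints: all must be absorbed by choosing $\delta$ smaller than a fixed fraction of the gap between $r$ and the min expression (e.g.\ one third, as in the binary proof), after which $p_{de}$ and $p_{fa}$ are simultaneously below $\epsilon$. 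The decoding-error half, by contrast, is entirely standard given the cited channel coding theorem.
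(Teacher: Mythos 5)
Your proposal is correct and follows essentially the same route as the paper's proof: random coding with a near-optimal input distribution $P_X$, jointly typical decoding plus expurgation for the decoding-error half, and for false acceptance the same two ingredients --- a bound of roughly $2^{n(H_P(Y|X)+O(\delta))}$ on the size of each accepting set, combined with a concentration bound on $-\log Q(Y^n \mid z^n)$ holding uniformly over Eve's inputs $z$ (the paper handles this step via a footnoted extension of the typical-sequence theorem to independent, non-identically distributed variables, which is exactly your weak-law estimate). The small discrepancies in the $\delta$-constants are bookkeeping, as you note.
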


\begin{proof}
We need the notion of joint typicality \cite[Section 7.6]{cover2006elements}:
\begin{definition}
	Let $(X,Y)$ be a pair of random variables taking values in $\mathbb{X} \times \mathbb{Y}$ with joint probability mass function $P$. Let $(X_1,Y_1), (X_2,Y_2), \dots$ be a sequence of i.i.d. pairs, each pair having the same distribution as $(X,Y)$. Let $(X^n,Y^n)=(X_1\dots X_n, Y_1\dots Y_n)$. An element $(x^n,y^n) = (x_1\dots x_n,y_1\dots y_n) \in \mathbb{X}^n \times \mathbb{Y}^n$ is jointly $\delta$-typical if
	\begin{align*}
	|\frac{1}{n} \log Pr((X^n,Y^n) = (x^n,y^n)) + H(X,Y)| &< \delta \\
	|\frac{1}{n} \log Pr(X^n=x^n) + H(X)| &< \delta \\
	|\frac{1}{n} \log Pr(Y^n=y^n)+ H(Y)| &< \delta 
	\end{align*}
	The set of all jointly $\delta$-typical sequences for length $n$ and probability mass function $P$ is denoted $JT(n,P,\delta)$
\end{definition}

\begin{theorem}\label{thm:JoitnlyTypicalSequences}
		 In the setup from the definition above, we have \[\forall \delta > 0 \;\; \lim_{n \rightarrow \infty} Pr((X^n,Y^n) \in JT(n, P, \delta)) = 1\] Moreover, if $\tilde{X}^n, \tilde{Y}^n$ are independent and have the same marginals as $X^n,Y^n$, then \[Pr((\tilde{X}^n,\tilde{Y}^n) \in JT(n,P,\delta)) \leq 2^{-n(I(X;Y) - 3 \delta)} \]
	
\end{theorem}
Now, we can proceed to prove Proposition \ref{prop:GoodEncodingDecodingGeneralCase}. As in Section \ref{sec:AuthenticationFromChannelNoise}, we follow the proof of the noisy channel coding theorem \cite[Chapter 7]{cover2006elements} to bound the probability of decoding error, and perform an additional analysis to bound also the probability of false acceptance.  Let $P_X$ and $\delta$ be such that \[r + 3 \delta < \min\{I_P(X;Y), \min_z H_Q(Y|Z=z) - H_P(Y|X)\} \]

Alice chooses codewords $C_1, \dots C_{2^{rn+1}}$ at random, with each symbol of each codeword being independent with probability mass function $P_X$. Bob uses jointly-typical decoding: "On input $y^n$, if there is a unique $i$ such that $(C_i, y^n) \in JT(n,P,\delta)$ then decode to $i$, otherwise output $\bot$." 

If Alice inputs $C_i$ into the channel and Bob gets output $Y^n$, then Bob's probability of decoding error is 
\begin{multline*} 
	Pr(\text{Decoding error on input } C_i) \\
	\leq Pr((C_i,Y^n) \notin JT(n,P,\delta)) 
	+ \sum_{j \neq i} Pr((C_j,Y^n) \in JT(n,P,\delta)) 
\end{multline*}
and both terms go to zero as $n$ goes to infinity, by Theorem \ref{thm:JoitnlyTypicalSequences} and the choice of $r, \delta, P_X$. 

Thus, for any $\epsilon > 0$ and all sufficiently large $n$, there exist particular codebooks $\{c_1, \dots c_{2^{rn+1}}\}$ such that \[ \frac{1}{2^{rn+1}} \sum_i Pr(\text{Decoding error on input } c_i) < \frac{\epsilon}{2} \] Picking the best $2^{rn}$ codewords of such a codebook, we obtain a codebook of size $2^{rn}$ and maximum probability of decoding error at most $\epsilon$. 

Next, we need to bound the probability that Bob accepts a message coming from Eve. Let $S_i \subset \mathbb{Y}^n$ be the set of channel outputs that Bob decodes to $i$. We will bound the number of elements of $S_i$: using the definition of joint typicality we get 
\[2^{-n(H(X)-\delta)} \geq P_{X^n}(c_i) 
\geq \sum_{y^n \in S_i} P_{X^n Y^n}(c_i,y^n) \geq |S_i| 2^{-n(H(X,Y)+\delta)} \]
so $|S_i| \leq 2^{n(H(Y|X)+2 \delta)}$. 

Now suppose that Eve inputs $z^n$ in the channel and Bob gets output $Y^n$. What is the probability that Bob doesn't decode to $\bot$? 
\begin{multline*}
Pr(Y^n \in \cup_i^{2^{rn}} S_i) 
\leq Pr(Y^n \text{ is not $\delta$-typical}) + 2^{-\sum_{i=1}^n H_Q(Y|Z=z_i) + n \delta} \sum_{i=1}^{2^{rn}} |S_i| \\
\leq Pr(Y^n  \text{ is not $\delta$-typical}) 
+ 2^{-n(\min_z H_Q(Y|Z=z) - H_P(Y|X) - r - 3 \delta)}  
\end{multline*} 
and both terms go to zero as $n$ goes to infinity.\footnote{Note that in bounding the probability that $Y^n$ is not $\delta$-typical, we have used an extension of the theorem of typical sequences to handle the case of a sequence of random variables that are independent but not necessarily identically distributed; this extension has the same proof: Chebyshev's Inequality $\Rightarrow$ Law of Large Numbers $\Rightarrow$ Theorem of Typical Sequences. } This completes the proof of Proposition \ref{prop:GoodEncodingDecodingGeneralCase}. 
\end{proof}

Now, we can also complete the proof of Theorem \ref{thm:GeneralDMC}: it follows immediately from Propositions \ref{prop:FromErrorsToConstruction} and \ref{prop:GoodEncodingDecodingGeneralCase}. 

\subsection{A converse result?}

A natural question is whether one can prove a converse result; that is, whether one can prove that if Alice and Bob attempt to transmit information at a rate \[ r > \sup_{P_X} \Big( \min \{ I_P(X,Y), \min_z H_Q(Y|Z=z) -H_P(Y|X) \} \Big)\] bits per channel use, then they must necessarily sacrifice either error correction or authentication. We give an example showing that this is not the case. 

Let the alphabet for Alice be $\{0,1\}$, the alphabet for Bob be $\{0,1,2,3\}$, and the alphabet for Eve be $\{0,1\}$. Let the transition probabilities from Alice to Bob be \[P(0|0) = P(1|1) = 1-p, \;\; \;\; \;\; P(0|1) = P(1|0) = p\] all other probabilities being zero. Thus, the channel from Alice to Bob is a binary symmetric channel with parameter $p$, that only uses the first two symbols of Bob's alphabet. Let the transition probabilities from Eve to Bob be \[ Q(2|0) = Q(3|1) = 1 \] all other probabilities being zero. Thus, the channel from Eve to Bob is a perfect binary channel that uses only the second two symbols of Bob's alphabet. Then, the upper bound from equation \eqref{eq:RatesForWhichTheProofApplies} is $-h(p) < 0$. Nevertheless, it is clear that Alice and Bob can transmit at any rate up to the capacity $1-h(p)$ of the binary symmetric channel between them and can achieve both authentication and error correction. Indeed, Bob can tell that a message comes from Eve by the presence of output symbols $2,3$ from the channel. 

This example shows that the upper bound on the rate given by equation \eqref{eq:RatesForWhichTheProofApplies} is not a fundamental limit but is an artifact of the particular proof technique used. It also shows that it is possible to simultaneously achieve error correction and authentication in certain cases where the channel from Alice to Bob is \emph{more} noisy than the channel from Eve to Bob. 

\subsection{Adversaries that can block messages}

Certain treatments of authenticated channels, for example \cite{portmann2014key}, allow the adversary to block Alice's messages from reaching Bob for both the real and the ideal resource. We can model this by adding the following line to the pseudo-code of both the real resource $\mathcal{N}^n_{P,Q}$ and the ideal resource $\mathcal{A}^n$: 
\begin{enumerate}
	\item[0.] On input $b \in \{0,1\}$ at a (separate) Eve interface, if $b=0$ then do not output anything to Bob in line 1.  
\end{enumerate}

This extra option for the adversary Eve necessitates a small modification in the proof of Proposition \ref{prop:FromErrorsToConstruction}: the filters $\sharp_E, \flat_E$ have to always input $b=1$ to their respective resources, the simulator $\sigma$ has to convey the bit $b$ from the outside to the inside interface, and the distinguisher $\mathcal{D}$ has to consider inputs $(Alice-In, Eve-In)$ of the form $(x,z,b)$ where $x$ is a string or $"no-input"$, $z$ is a string or $"no-input"$, $b \in \{0,1\}$, and if $b=1$ then at least one of $x,z$ has to be $"no-input"$. 

\subsection{Efficient encoding and decoding}

In this subsection, we return to the Binary Symmetric Channel model from Section \ref{sec:AuthenticationFromChannelNoise}. At first sight, Theorem \ref{thm:Main} looks like an existential result: it states the existence of good encoding and decoding, but does not give an explicit construction, neither does it specify the required computational resources for good encoding and decoding. 

However, if we look closely at the proof, we see that it depends only on the following: the set of all channel outputs that Bob accepts is too small from the point of view of Eve, so that an input from Eve is unlikely to be corrupted into this set. Thus, we can take any class of error correcting codes with efficient encoding and decoding, for example low density parity check codes \cite{gallager1962low,spielman1995computationally,luby2001improved}, and within that class we can choose a code with the number of codewords and the radius of the hamming balls decoded to each codeword as required for the proof of Theorem \ref{thm:Main}. 

\section{Conclusion and future work}\label{sec:Conclusion}

We have shown that if the channel from Alice to Bob is less noisy than the channel from Eve to Bob, then Alice and Bob can accomplish error correction and message authentication simultaneously. The intuition behind the result is that for long sequences, there is a subset $S$ of the channel outputs for Bob such that $S$ is large when measured by the probability that a codeword from Alice is corrupted into it, and $S$ is also small when measured by the probability that any input from Eve is corrupted into it. 

To ensure seamless integration of the authentication scheme proposed here with other cryptographic protocols, we have proved it provides composable, information theoretic security using the Abstract Cryptography framework. We have also shown that error correcting codes with efficient encoding and decoding can be used, as long as the set $S$ of outputs that Bob accepts is small from the point of view of Eve. 

The present paper raises a number of interesting questions that can be the subject of future work; we list some of them here. First, what is the set of all rates $r$ such that Alice and Bob can transmit information at rate $r$ bits per channel use and achieve both error correction and authentication? In the present paper, we have shown that rates up to a certain bound are always achievable, but have also given an example where a rate higher than the bound is possible. Thus, the complete characterization of the achievable rates is still not known. Second, would allowing two way communication and interaction between Alice and Bob give further possibilities, as was the case for secrecy in the wiretap channel \cite{maurer1991perfect}, and for authentication in the shared randomness model \cite{renner2003unconditional,renner2004exact}? Third, is it possible to combine the coding for the broadcast channel and for the authentication channel to achieve error correction, authentication and secrecy simultaneously? 

\section*{Acknowledgments}

This work was supported by the Luxembourg National Research Fund (CORE project AToMS).

\end{document}